\documentclass[journal]{IEEEtran}
%
\usepackage{bbm}
\usepackage{caption}
\usepackage{subfigure}
\usepackage{graphicx}
\usepackage{latexsym}
\usepackage{diagbox}
\usepackage{changepage}
\usepackage[fleqn]{amsmath}
\usepackage{amsmath}
\usepackage{amsfonts}
\usepackage{indentfirst}
\usepackage{CJK}
\usepackage{indentfirst}
\usepackage[varg]{txfonts}
\usepackage{stfloats}
\usepackage{multirow}%
\usepackage{booktabs}
\usepackage{color,soul}
\usepackage{epstopdf}
\usepackage{float}
\usepackage{bm}
\usepackage{makecell}
\usepackage{array}
\usepackage{bm}
\usepackage{mathtools}
\usepackage{geometry}
\geometry{left=1.6cm,right=1.6cm,top=2cm,bottom=2cm}
\usepackage[linesnumbered,ruled,commentsnumbered,longend]{algorithm2e}
\usepackage[linesnumbered,ruled,commentsnumbered,longend]{algorithm2e}
\makeatletter

\newcommand{\Rmnum}[1]{\expandafter\@slowromancap\romannumeral #1@}

\makeatother
\makeatletter
\newtheorem{theorem}{Theorem}

\newenvironment{proof}[1][Proof]{\begin{trivlist}
		\item[\hskip \labelsep {\itshape #1}]}{\end{trivlist}}

\newcommand{\qed}{\nobreak \ifvmode \relax \else
	\ifdim\lastskip<1.5em \hskip-\lastskip
	\hskip1.5em plus0em minus0.5em \fi \nobreak
	\vrule height0.75em width0.5em depth0.25em\fi}

\SetAlgoLongEnd

\hyphenation{op-tical net-works semi-conduc-tor}

\begin{document}

\title{Energy-Efficient Hybrid Precoding Design for Integrated Multicast-Unicast Millimeter Wave Communications with SWIPT}
\author{Wanming Hao,~\IEEEmembership{Member,~IEEE,}  Gangcan Sun, Fuhui Zhou,~\IEEEmembership{Member,~IEEE,} De Mi, Jia Shi,~\IEEEmembership{Member,~IEEE,} Pei Xiao,~\IEEEmembership{Senior Member,~IEEE} and Victor  C. M. Leung,~\IEEEmembership{Fellow,~IEEE}
    \thanks{W. Hao and G. Sun are with the School of Information Engineering, Zhengzhou University, Zhengzhou 450001, China, and W. Hao is also with the 5G Innovation Center, Institute of Communication Systems, University of Surrey, Guildford GU2 7XH, U.K. (E-mail: \{iewmhao, iegcsun\}@zzu.edu.cn).}
	\thanks{D. Mi and P. Xiao are with  the 5G Innovation Center, Institute of Communication Systems, University of Surrey, Guildford GU2 7XH, U.K. (Email: \{d.mi, p.xiao\}@surrey.ac.uk).}
	\thanks{F. Zhou is the College of Electronic and Information Engineering, Nanjing University of 
		Aeronautics and Astronautics, Nanjing, 210000, P. R. China. (Email: zhoufuhui@ieee.org). 
	}
    \thanks{J. Shi is with the State Key Laboratory of Integrated Services Networks, Xi'dian University, Xian, 710071, China (email:jiashi@xidian.edu.cn).}
	\thanks{Victor C. M. Leung is with College of Computer Science and Software Engineering, Shenzhen University, Shenzhen, China 518060, and the Department of Electrical and Computer Engineering, The University of British Columbia, Vancouver, BC V6T1Z4, Canada (e-mail: vleung@ece.ubc.ca).}
}
\maketitle

\begin{abstract}In this paper, we investigate the energy-efficient hybrid  precoding design for integrated multicast-unicast millimeter wave (mmWave) system, where the simultaneous wireless information and power transform is considered at receivers. We adopt two sparse radio frequency chain antenna structures at the base station (BS), i.e., fully-connected and subarray structures, and design the codebook-based analog precoding according to the different structures. Then, we formulate a joint digital multicast, unicast precoding and power splitting ratio optimization problem to maximize  the energy efficiency of the system, while the maximum transmit power at the BS and minimum harvested energy at receivers are considered. Due to its difficulty to directly solve the formulated problem, we equivalently transform the fractional objective function  into a subtractive form one and propose a two-loop iterative algorithm to solve it. For the outer loop, the classic Bi-section iterative algorithm is applied. For the inner loop,  we transform the formulated problem into a convex one by successive convex approximation techniques and propose  an iterative algorithm to solve it. Meanwhile, to reduce the complexity of the inner loop, we develop a zero forcing (ZF) technique-based low complexity iterative algorithm. Specifically, the ZF technique is applied to cancel the inter-unicast interference and the first order Taylor approximation is used for the convexification of the non-convex constraints in the original problem. Finally, simulation results are provided to compare the performance of the proposed algorithms under different schemes.
\end{abstract}

\begin{IEEEkeywords}
	Hybrid precoding, mmWave, multicast, unicast, energy efficiency, SWIPT.
\end{IEEEkeywords}

\section{Introduction}
Millimeter wave (mmWave) (30-300 GHz), owning the wider bandwidth, has been considered as a promising technique to meet the requirement with an exponential data traffic growth in future wireless communications~\cite{Ref0}. Furthermore, due to the short wavelengths in mmWave bands, more antennas can be packed with a small physical size, forming a massive multi-input multi-output (mMIMO) mmWave system~\cite{Ref2}.  However, the use of a large number of antennas will cause a huge energy consumption and hardware cost when fully digital signal processing is applied, in the sense that each antenna needs a dedicate radio frequency (RF) chain, and the power consumption of the RF chain is as high as 250 mW at mmWave frequencies~\cite{Ref3}. To tackle this problem, a hybrid analog/digital precoding scheme can be employed, where the required number of RF chains will be much less than that of antennas~\cite{Ref4}.  Based on the connectivity of RF chains, two types of structures are generally considered when the number of RF chains is small, one is fully-connected structure and the other is subarray structure. For the former, each RF chain is connected to all the antennas with a large number of phase shifters, which can obtain a high spectrum efficiency (SE). On the contrary, for the latter, each RF chain is required to connect a subset of antennas with a small number of phase shifters, thus a high energy efficiency (EE) is obtained~\cite{Ref5}-\cite{Ref7}. {For example, \cite{8Ge} investigates the energy and cost efficiency optimization solutions for 5G wireless communication systems with massive antenna and RF chains, and formulates a EE optimization problem. Based on this, a suboptimal iterative hybrid precoding algorithm is proposed. In~\cite{9Ge}, the authors formulate a joint optimization problem of computation and communication power based on a partially-connected RF chain structure and an upper bound of EE is derived. Meanwhile, a suboptimal solution consisting of the baseband and RF precoding matrices is proposed.}

On the other hand, the simultaneous wireless information and power transfer (SWIPT) has also been identified a promising technique for future wireless communications~\cite{Ref8}-\cite{Ref10}. In general, there are two practical schemes for the SWIPT, namely power splitting and time switching~\cite{Ref8}. With power splitting, the receivers split the received RF signals for information detection and  energy harvesting at the same time, while with time switching, the receivers switch between information detection and energy harvesting at different times~\cite{Ref11}.  In fact, SWIPT is a very effective solution for a multi-user system, where the interference power can be transformed into the energy at receivers. For example, \cite{Ref12},~\cite{Ref14} study the joint information and energy beamforming optimization problem in a MIMO interference channels, and some interference alignment schemes are proposed in the interference networks with SWIPT~\cite{Ref15},~\cite{Ref16}. However, the interference channel also causes  the difficulty for information decoding~\cite{Ref17}. Therefore, how to trade off the information and harvested energy is a challenge in the SWIPT system.

Having both mmWave and SWIPT as technological enablers for the energy-efficient wireless communications, the future cellular network can potentially support a wide range of services with a variety of very diverse requirements. There is an increasing demand for the multicast content delivery service over cellular networks, where a group of subscribed users intend to receive the same content~\cite{Ref18},~\cite{Ref19}. It is often that these users would request the customized content, when consuming the multicast content at the same time. Taking the object-based broadcasting (OBB) scenario as an example, each of the subscribed users intends to simultaneously receive both common message over multicast and private message over unicast. To this end, a joint multicast and unicast transmission can be an effective and efficient solution approach, comparing with the conventional frequency/time division multiplexing~\cite{Ref20}. However, the cross interference between multicast and unicast should be properly managed~\cite{Ref21}. {Therefore, the major challenges in considering joint unicast and multicast include: $\textit{i)}$ How to simultaneously transmit unicast and multicast signals; $\textit{ii)}$ How to jointly design the unicast and multicast beamforming.}

\begin{figure}[t]
	\begin{center}
		\includegraphics[width=9cm,height=3cm]{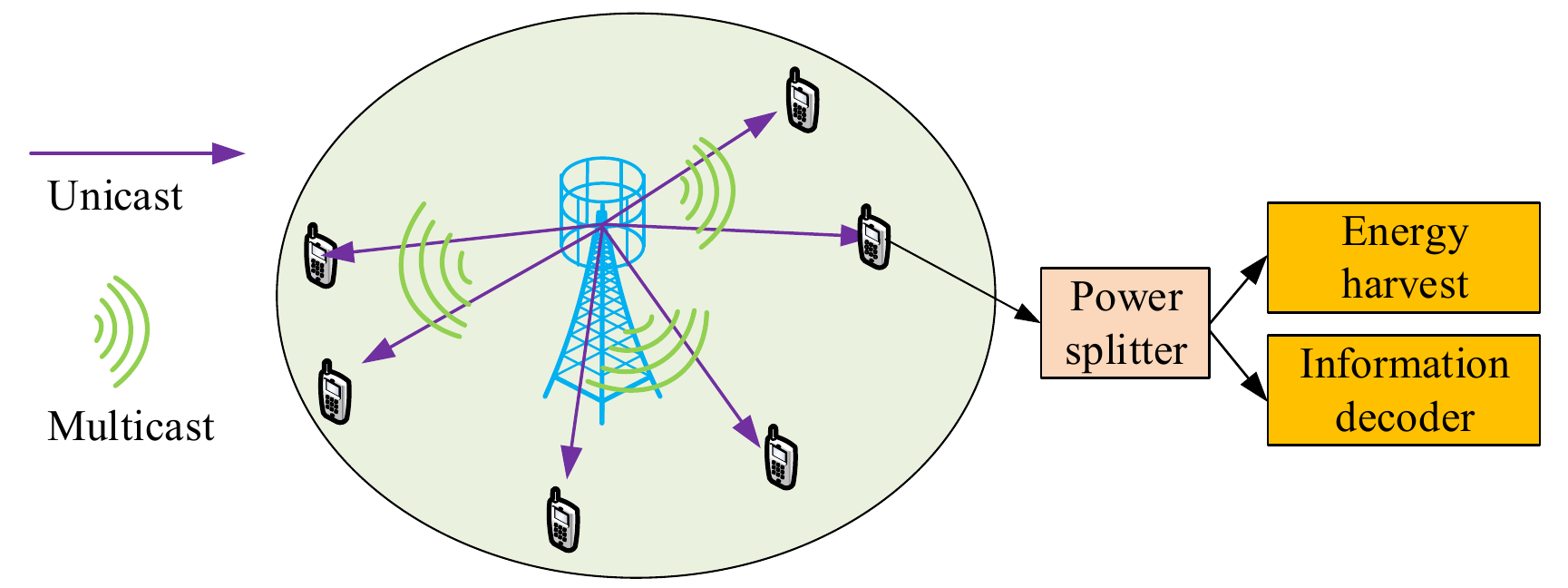}
		\caption{Joint multicast-unicast millimeter wave massive MIMO systems with~SWIPT.}
		\label{figsystem}
	\end{center}
\end{figure}

\subsection{Related Works}
Currently, several works have been conducted to investigate the part of the above problems. For the SWIPT system, \cite{Ref21a} studies the joint optimal power allocation and power splitting ratio to maximize the minimum signal to interference plus noise ratio (SINR) of all users. The authors in \cite{Ref21b} investigate the EE maximization problem by jointly optimizing the beamforming and power splitting ratio. Then, a low-complexity zero forcing (ZF) beamforming algorithm is proposed. Later, the authors extend the SWIPT system to heterogeneous networks~\cite{Ref21c}, where small cell BSs can harvest the energy from the macro BS. Based on this, a joint optimization energy harvesting rate and achievable throughput of small cell users is formulated, and then a sub-optimal iterative algorithm is proposed. The authors in~\cite{Ref21d} apply the SWIPT to a multicast system, where multiple users share the same message. The authors propose an efficient subcarrier allocation and power allocation scheme to maximize the minimum SINR at each subcarrier. Similarly, they later formulate a non-convex optimization problem as maximizing the minimum SINR among users in~\cite{Ref24} , and two successive convex approximation (SCA)-based iterative algorithms are proposed to solve the formulated  problem. Meanwhile, they extend the system to the sparse RF chain structure at the BS in~\cite{Ref25}. For the above system, they develop an efficient antenna selection and hybrid beamforming design algorithm to minimize the transmit power.

In addition, \cite{Ref26} investigates the spectrum and energy-efficient beamforming design problem in the mMIMO-NOMA mmWave with lens antenna array, and the ZF precoding scheme is used to reduce the inter-beam interference. Then, a dynamic power allocation algorithm is proposed to maximize the sum rate of the system. Later, the authors extend the system to SWIPT in~\cite{Ref26a}, and they propose an effective hybrid precoding and user grouping algorithm. Next, the weighted minimum mean square error (WMMSE)-based power allocation algorithm is developed to solve the formulated sum rate maximization problem.   The authors in~\cite{Ref22a}  investigate the energy-efficient hybrid beamforming strategy, where the fully digital power minimization problem is first analyzed. Then, the authors propose an iterative hybrid beamforming scheme to obtain the optimal solution. In~\cite{Ref23}, the authors investigate the hybrid beamforming design problem in mmWave joint unicast and multicast system. Based on this, a low-complexity optimization algorithm is proposed to solve the formulated sum rate maximization problem.

However,  the existing works have not jointly investigated the multicast-unicast mmWave communication with SWIPT. For example, \cite{Ref21a}-\cite{Ref21c} only consider the SWIPT, and~\cite{Ref21d}-\cite{Ref25} only study the multicast transmission. In addition, \cite{Ref26}-\cite{Ref23} investigate the hybrid precoding in mmWave, and the joint multicast and unicast transmission is only considered in~\cite{Ref23}.
\subsection{Main Contributions}
Different from the previous works, in this paper, we consider the joint multicast-unicast mmWave communication, where the SWIPT is applied at each receiver.  The main contributions are summarized as follows
\begin{itemize}
	{\item To reduce the hardware cost and energy consumption, we consider two sparse RF chain structures at the BS, i.e, fully-connected and subarray structures, and the corresponding analog precoding  are designed according to the predefined codebook. On this basis, we formulate an EE maximization problem by jointly optimizing unicast, multicast precoding and power splitting ratio. Meanwhile, we consider the maximum transmit power constraint of the BS and the minimum harvested energy requirement of each receiver. The formulated problem is non-convex, which is intractable in its original form. 
	\item We equivalently transform the fractional objective function of the optimization problem into a subtractive form one and then,  a two-loop iterative algorithm is developed. Specifically, the Bi-section iterative algorithm is applied at the outer loop. For the inner loop, we still need to solve a non-convex optimization problem. To this end, by introducing auxiliary variables and employing the SCA technique, we transform the original  problem into a convex one and propose an iterative algorithm to solve it.
	\item To reduce the complexity of the inner loop, we further develop  a low-complexity iterative algorithm. Specifically, we apply the ZF technique to cancel the inter-unicast interference, simplifying the unicast beamforming design as power allocation problem. Then, the first Taylor approximation is adopted to transform the non-convex constraints into convex ones, and the convex optimization technique is used to solve the inner optimization problem.}
\end{itemize}

The rest of this paper is organized as follows. The system description and analog precoding design schemes are presented in Section~II. The EE maximization problem is formulated and solved in Section~III. A low complexity ZF-based iterative algorithm is developed in Section~IV. Simulation results are presented in Section~V. Finally, we conclude this paper in Section~VI.

\textit{Notations}: We use the following notations throughout this paper: $(\cdot)^\ast$, 
$(\cdot)^T$ and $(\cdot)^H$ denote the conjugate, transpose and Hermitian transpose, respectively, $\|\cdot\|$ is the Frobenius norm, ${\mathbb{C}}^{x\times y}$ means the space of $x\times y$ complex matrix, {Re($\cdot$)} denotes real number operation.

\section{System Description and Analog Precoding Design}
In this section, we will describe the investigated mmWave system, including system model and  mmWave channel model. Then, two analog precoding design schemes are developed. 
\subsection{System Description}
We consider a downlink mmWave communication system as shown in~Fig.~\ref{figsystem}, where the BS is equipped with $N_{\rm{TX}}$ antennas. To reduce the hardware cost and energy consumption, we assume that the BS is equipped with $N_{\rm{RF}}$ RF chains ($N_{\rm{RF}}\leq N_{\rm{TX}}$). In addition,  $K $ ($K\leq N_{\rm{RF}}$) single-antenna users are served simultaneously with multicast and unicast, where $\mathcal{K}=\{1,\dots,K\}$ denotes the set of all users.  In this paper, we focus on a scenario that all user receive a common information stream by multicast, meanwhile each user obtains a private information stream by unicast. In general, there are two types of sparse RF chain structures at the BS. One is the fully-connected structure as shown in~Fig.~\ref{a}, where each RF chain is connected to all antennas through $N_{\rm{TX}}$ phase shifters. Another is the subarray structure as shown in~Fig.~\ref{b}, where each RF chain is connected to a disjoint subset of antennas through several phase shifters.

The received signal by the $k$th user can be expressed as 
\begin{eqnarray}
\begin{aligned}
y_{k}=&\underbrace{{\bf{h}}_k{\bf{F}}{\bf{v}}_kx_k}_{\rm{Desired\;private\;signal}}+\underbrace{{\bf{h}}_k{\bf{F}}{\bf{v}}_0x_0}_{\rm{Desired\;common\;signal}}\\
+&\underbrace{{\bf{h}}_k{\bf{F}}\sum_{i\neq k}^{K}{\bf{v}}_ix_i}_{\rm{Multi-user\;interference}}+\underbrace{n_k}_{\rm{Noise}},
\end{aligned}
\end{eqnarray}
where ${\bf{h}}_k\in \mathbb{C}^{1\times N_{\rm{TX}}}$, ${\bf{v}}_k\in \mathbb{C}^{N_{\rm{RF}}\times 1}$, and $x_k$, respectively, denote the downlink channel vector, digital precoding vector and private signal for the $k$th user. ${\bf{v}}_0\in \mathbb{C}^{N_{\rm{RF}}\times 1}$ and $x_0$ are the digital precoding vector and common signal for the $k$th user, respectively. $n_k$ is an independent and identically distributed (i.i.d.) additive white Gaussian noise (AWGN) defined as  $\mathcal{CN}(0,\delta_0^2)$. ${\bf{F}}\in \mathbb{C}^{N_{\rm{TX}}\times N_{\rm{RF}}}$ means the analog precoding matrix implemented by the equal power splitter and phase shifters~\cite{hao}. For the fully-connected structure, ${\bf{F}}$ can be written as 
\begin{eqnarray}
	{\bf{F}}=[{\bf{f}}_1,{\bf{f}}_2,\dots,{\bf{f}}_{N_{\rm{RF}}}],
\end{eqnarray} 
where ${\bf{f}}_k\in \mathbb{C}^{N_{\rm{TX}}\times 1}$ is the analog precoding vector associated with the $k$th RF chain, and $|({\bf{f}}_k)_i|=1/\sqrt{N_{\rm{TX}}} (i\in \{1,\dots,N_{\rm{TX}}\})$. Similarly, for the subarray structure, ${\bf{F}}$ can be expressed as
\begin{eqnarray}\label{eq1b}
{\bf{F}}=\left[
\begin{array}{cccc}
{\bf{f}}_1 & {\bf{0}}&\cdots & {\bf{0}} \\
{\bf{0}}&{\bf{f}}_2 & \cdots & {\bf{0}}\\
\vdots & \vdots & \ddots&\vdots \\
{\bf{0}} & {\bf{0}} &\cdots & {\bf{f}}_{N_{\rm{RF}}} \\
\end{array}
\right],
\end{eqnarray}
where ${\bf{f}}_k\in\mathbb{C}^{N_{\rm{SUB}}\times 1}$ denotes the analog precoding vector associated with the $k$-th RF chain with $|({\bf{f}}_k)_i|\!=\!1/\sqrt{N_{\rm{SUB}}}\;(i\!=\!{1\!,\!\ldots\!,\!N_{\rm{SUB}}})$. Here, $N_{\rm{SUB}}$ denotes the number of antennas connected to each RF chain, and we assume that $N_{\rm{SUB}}$ is the same for all RF chains with $N_{\rm{SUB}}=N_{\rm{TX}}/N_{\rm{RF}}$ \footnote[1]{Here, $N_{\rm{SUB}}$ should be an integer. In fact, when $N_{\rm{SUB}}$ is not an integer, i.e. the number of antennas in different subarrays may be different, this  is also suitable in our scheme.}. 
\begin{figure}[t]
	\center
	\subfigure[]{
		\label{a} 
		\includegraphics[scale=0.8]{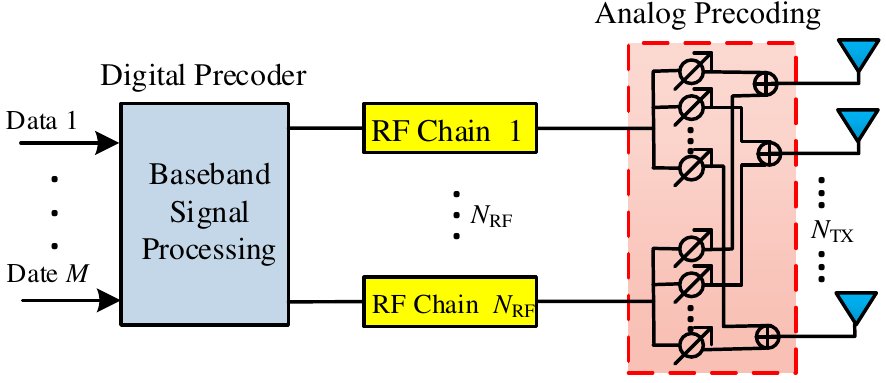}}
	\subfigure[]{
		\label{b} 
		\includegraphics[scale=0.8]{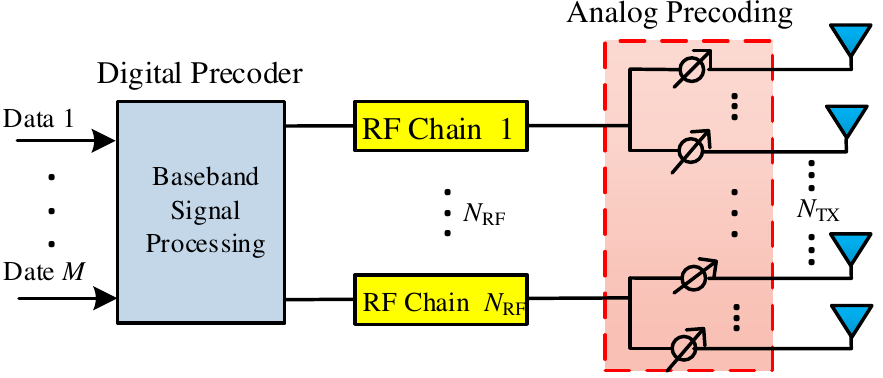}}
	\caption{Two sparse RF chain structures: (a) Fully-connected structure; (b) Subarray structure.}
	\label{SystemFigure} 
	\center
\end{figure}

In addition, each user splits the received signal into information decoder (ID) and energy harvester (EH). We assume that the $\beta_{k}$ portion of received signal power is divided into the ID, while the remaining $1-\beta_{k}$ portion of the received signal power is transformed into the EH. Accordingly, the received signal used for EH by the $k$th user can be written as
\begin{eqnarray}
	y_{k}^{\rm{EH}}=\sqrt{1-\beta_{k}}y_k,
\end{eqnarray}
and the harvested energy is
\begin{eqnarray}
	E_k=\varepsilon(1-\beta_{k})\left(\sum_{i=0}^K|{\bf{h}}_k{\bf{F}}{\bf{v}}_i|^2+\delta_0^2\right),
\end{eqnarray}
where $\varepsilon\in(0,1]$ denotes the energy conversion efficiency. The received signal used for ID can be expressed as
\begin{eqnarray}
	y_{k}^{\rm{ID}}=\sqrt{\beta_{k}}y_k+n'_k,
\end{eqnarray}
where $n'_k\sim \mathcal{CN}(0,\delta_1^2)$ is the addition noise caused by the ID. 

{Recently, layered-division multiplexing (LDM), a form of non-orthogonal multiplexing technology, has been introduced in cellular networks for joint multicast and unicast transmission, which is a key technology for next generation terrestrial digital television standard ATSC 3.0~\cite{24ldm}.  A layered transmission structure is applied to LDM for transmitting multiple signals with different power levels and robustness for different services and reception environment. Each user first decodes the upper layer most robust signal, and then cancels it from the received signal before decoding the next layer signal by successive interference cancellation (SIC) technique. Therefore, in this work, we can adopt a two-layer LDM structure, where the first layer is intended for multicast services and the second layer is for unicast services.   In general, the decoding order of the multicast and unicast messages at each receiver can be optimized according to the instantaneous channel condition. Since the multicast message is intended for multiple users and should have a higher priority~\cite{29Tao}, each user first decodes the multicast message and subtracts it from the received message, and then decodes the unicast message.} To this end, the achievable SINR of the common signal at the $k$th user can be expressed~as
\begin{eqnarray}\label{sinr1}
	\gamma_k^0=\frac{\beta_{k}|{\bf{h}}_k{\bf{F}}{\bf{v}}_0|^2}{\beta_{k}\left(\sum_{i=1}^K|{\bf{h}}_k{\bf{F}}{\bf{v}}_i|^2+\delta_0^2\right)+\delta_1^2},
\end{eqnarray} 
and the achievable SINR of the private signal at the $k$th user can be expressed as
\begin{eqnarray}\label{sinr2}
	\gamma_k=\frac{\beta_{k}|{\bf{h}}_k{\bf{F}}{\bf{v}}_k|^2}{\beta_{k}\left(\sum_{i\neq k}^K|{\bf{h}}_k{\bf{F}}{\bf{v}}_i|^2+\delta_0^2\right)+\delta_1^2}.
\end{eqnarray}

For mmWave channel, we adopt a widely used geometric channel mode as follows~\cite{28Heath}, 
\begin{eqnarray}
	{\bf{h}}_k=\sqrt{\frac{N_{\rm{TX}}}{L}}\sum_{l=1}^{L}\alpha_k^l{\bf{a}}(\theta_k^l),
\end{eqnarray}
where $L$ is the number of paths, $\alpha_k^l$ represents the complex gain of the $l$th path. ${\bf{a}}(\theta_k^l)$ is the antenna array response vector at user $k$. When the uniform linear array is used, ${\bf{a}}(\theta_k^l)$ can be expressed as
\begin{eqnarray}\label{array}
	{\bf{a}}(\theta_k^l)=\frac{1}{\sqrt{N_{\rm{TX}}}}\left[1,e^{j(2\pi /\lambda)d\sin(\theta_k^l)},\dots,e^{j(N_{\rm{TX}}-1)(2\pi /\lambda)d\sin(\theta_k^l)}\right],
\end{eqnarray}
where $\theta_k^l\in [0,2\pi]$ is the azimuth angles of departure of the BS at the $l$th path, $\lambda$ is the wavelength, and $d$ is the distance between two adjacent antenna array elements.

\begin{algorithm}[t]
	{\caption{The Analog Precoding Selection Algorithm for the Fully-Connected Structure.}
		\label{algorithm1}
		{\bf{Initialize}}  $\mathcal{A}$, ${\bf{F}}=[]$, $n=1$.\\
		\While {$n\leq N_{\rm{RF}}$}
		{\For{$k=1:K$}{Compute ${\bf{f}}_n^{\star}=\underset{{\bf{f}}_n\in\mathcal{A}}{\arg \max}\;\;|{\bf{h}}_k{\bf{f}}_n|^2,$\\
				${\bf{F}}=[{\bf{F}}\;{\bf{f}}_n^{\star}]$, $\mathcal{A}=\mathcal{A}-\{{\bf{f}}_n^{\star}\}$,
				$n=n+1$.\\
				\If{$n>N_{\rm{RF}}$}{Break;}}
	}}
\end{algorithm} 
\subsection{Analog Precoding Design}

{For the sparse RF chain structure, the design of the analog precoding depends on the phase shifters. Although several analog percoding methods~\cite{29Yu}-\cite{31Liang} that may outperform the codebook-based one, to decrease the complexity of joint design,  we first apply the codebook-based method for designing analog precoding~\cite{31Heath},~\cite{32Wang}.} Based on this, we can obtain the analog precoding by searching the codebook defined as $\mathcal{A}=\{{\bf{a}}(\theta_k^l),\forall k,l\}$. For the fully-connected structure, the analog precoding of the $k$th user can be selected as 
\begin{eqnarray}
{\bf{f}}_k^{\star}=\underset{{\bf{f}}_k\in\mathcal{A}}{\arg \max}\;\;|{\bf{h}}_k{\bf{f}}_k|^2,
\end{eqnarray}
and we summarize the analog precoding selection scheme as Algorithm~\ref{algorithm1}.

For the subarray structure, all RF chains are connected to the disjoint subset of antennas, and each of them has one beamforming direction. To this end, we divide the channel ${\bf{h}}_k$~as 
\begin{eqnarray}
{\bf{h}}_k=[{\bf{h}}_{k,1},{\bf{h}}_{k,2},\dots,{\bf{h}}_{k,N_{\rm{RF}}}],
\end{eqnarray} 
where ${\bf{h}}_{k,i}$ denotes the channel gain from the $i$th subarray antennas to the $k$th user. Different from the fully-connected structure, we need to search the codebook based on the subarray. For example, the analog precoding for the subarray $i$ at the $k$th user can be selected as 
\begin{eqnarray}
{\bf{f}}_k^{\star}=\underset{{\bf{f}}_k\in\mathcal{A}}{\arg \max}\;\;|{\bf{h}}_{k,i}{\bf{f}}_k|^2,
\end{eqnarray}
where $\mathcal{A}$ has become subarray-based codebook. Based on this, we summarize the analog precoding selection scheme as Algorithm~\ref{algorithm2}.
\begin{algorithm}[t]
	{\caption{The Analog Precoding Selection Algorithm for the Subarray Structure.}
		\label{algorithm2}
		{\bf{Initialize}}  $\mathcal{A}$, ${\bf{F}}={\bf{0}}_{N_{\rm{TX}}\times N_{\rm{RF}}}$, $n=1$.\\
		\While {$n\leq N_{\rm{RF}}$}
		{\For{$k=1:K$}{Compute ${\bf{f}}_n^{\star}=\underset{{\bf{f}}_n\in\mathcal{A}}{\arg \max}\;\;|{\bf{h}}_{k,n}{\bf{f}}_n|^2,$\\
				${\bf{F}}((n-1)N_{\rm{SUB}}+1:nN_{\rm{SUB}},n)={\bf{f}}_n^{\star}$, $\mathcal{A}=\mathcal{A}-\{{\bf{f}}_n^{\star}\}$,
				$n=n+1$.\\
				\If{$n>N_{\rm{RF}}$}{Break;}}
	}}
\end{algorithm} 
\section{EE Optimization Problem Formulation and Solution}
In this section, we formulate an EE maximization problem by jointly optimizing the unicast, multicast precoding and power splitting ratio, and then a two-loop iterative algorithm is proposed.
\subsection{Problem Formulation} 
In general, the power consumption includes two parts, namely transmit power and circuit power consumption. The circuit power consumption are mainly caused by baseband signal processing, RF chains and phase shifters~\cite{Ref2},~\cite{Ref6}. For the fully-connected structure, the circuit power consumptions can be written as
\begin{eqnarray}
	P_{\rm{C}}=P_{\rm{BB}}+N_{\rm{RF}}P_{\rm{RF}}+N_{\rm{RF}}N_{\rm{TX}}P_{\rm{PS}},
\end{eqnarray} 
where $P_{\rm{BB}}$, $P_{\rm{RF}}$, and $P_{\rm{PS}}$, respectively, denote the power consumption of the baseband, the RF chain and the phase shifter. Similarly, the circuit power consumption of the subarray structure can be expressed as
\begin{eqnarray}
P_{\rm{C}}=P_{\rm{BB}}+N_{\rm{RF}}P_{\rm{RF}}+N_{\rm{TX}}P_{\rm{PS}}.
\end{eqnarray}

Finally, we give the total power consumption as follows
\begin{eqnarray}
	P_{\rm{total}}=\sum_{k=0}^{K}\xi||{\bf{F}}{\bf{v}}_k||^2+P_{\rm{C}}.
\end{eqnarray}
where $\xi \geq 1$ is the inefficiency of the power amplifier~\cite{EE}.

Next, we define the EE of the system as
\begin{eqnarray}
	\eta_{\rm{EE}}\!=\!\frac{\underset{\forall k}{\min}\{\log_2(1\!+\!\gamma_k^0)\}\!+\!\sum_{k=1}^{K}\log_2(1\!+\!\gamma_k)}{P_{\rm{total}}} [\rm{bps/Hz}/\rm{W}].
\end{eqnarray}

In this paper, we aim  to maximize the EE of the system by jointly optimizing the power splitting ratio $\beta_{k}\;(k\in\mathcal{K})$ and digital precoding ${\bf{v}}_k\;(k\in\{0,\mathcal{K}\})$, which is written as
\begin{subequations}\label{OptA}
	\begin{align}
	\;\;\;\;\;\;\;\;\;\;\;\;\;\;\;\;\;\;\;\;\;\;\;&\underset{\left\{\{\beta_{k}\},\;\{{\bf{v}}_k\}\right\}}{\rm{max}}\;\;\;\;\;\;\eta_{\rm{EE}}\label{OptA0}\\
	{\rm{s.t.}}\;\;&E_{k}\geq E_{k}^{\rm{min}},k\in\mathcal{K},\label{OptA1}\\
	&0\leq\beta_{k}\leq 1, k\in\mathcal{K},\label{OptA2}\\
		&\sum_{k=0}^{K}||{\bf{F}}{\bf{v}}_{k}||^2\leq P_{\rm{max}},\label{OptA3}
	\end{align}
\end{subequations}
where $(\rm{\ref{OptA1}})$ denotes the minimum requirement of the harvested energy for each user, and $(\rm{\ref{OptA3}})$ means the maximum transmit power constraint for the BS.  {One can observe that higher transmit power is needed for a larger $E_{k}^{\rm{min}}$. However, $(\rm{\ref{OptA3}})$ means that the total transmit power is limited. When $P_{\rm{max}}$ is small while $E_{k}^{\rm{min}}$ is large, problem $(\rm{\ref{OptA}})$ may be infeasible. Therefore, it cannot be guaranteed that  $(\rm{\ref{OptA}})$ is always feasible. In general, for a given $P_{\rm{max}}$, when $(\rm{\ref{OptA}})$ is infeasible, we have to decrease $E_{k}^{\rm{min}}$.}
\subsection{The Proposed Two-Loop Iterative Algorithm}

After obtaining the analog precoding $\bf{F}$, we define the equivalent channel as $\hat{{\bf{h}}}_k={\bf{h}}_k{\bf{F}}$. By introducing auxiliary variables $t_0$ and $t_k$, the original EE maximization problem can be equivalently expressed as
\begin{subequations}\label{OptB}
	\begin{align}
	\;\;\;\;&\underset{\left\{\{\beta_{k}\},\;\{{\bf{v}}_k\},\;t_0,\; \{t_k\}\right\}}{\rm{max}}\;\;\;\;\;\;\frac{\sum_{k=0}^{K}\log_2(1+t_k)}{P_{\rm{total}}}\label{OptB0}\\
	{\rm{s.t.}}\;\;&\frac{\beta_{k}|\hat{{\bf{h}}}_k{\bf{v}}_0|^2}{\beta_{k}\left(\sum_{i=1}^K|\hat{{\bf{h}}}_k{\bf{v}}_i|^2+\delta_0^2\right)+\delta_1^2}\geq t_0, k\in\mathcal{K},\label{OptB1}\\
	&\frac{\beta_{k}|\hat{{\bf{h}}}_k{\bf{v}}_k|^2}{\beta_{k}\left(\sum_{i\neq k}^K|\hat{{\bf{h}}}_k{\bf{v}}_i|^2+\delta_0^2\right)+\delta_1^2}\geq t_k, k\in\mathcal{K},\label{OptB2}\\
	&\varepsilon(1-\beta_{k})\left(\sum_{i=0}^K|\hat{{\bf{h}}}_k{\bf{v}}_i|^2+\delta_0^2\right)\geq E_{k}^{\rm{min}},k\in\mathcal{K},\label{OptB3}\\
	&0\leq\beta_{k}\leq 1, k\in\mathcal{K},\label{OptB4}\\
	&\sum_{k=0}^{K}||{\bf{F}}{\bf{v}}_k||^2\leq P_{\rm{max}}.\label{OptB5}
	\end{align}
\end{subequations}

It is evident that (\ref{OptB}) is a non-convex optimization problem due to  non-convex constraints (\ref{OptB1})-(\ref{OptB3}). To solve the above problem, we first equivalently transform the fractional objective function into the subtractive form. We denote $q^\star$ as the obtained maximum EE of the system, namely
\begin{eqnarray}
	q^\star\!=\!\frac{\sum_{k=0}^{K}\log_2(1+t_k^\star)}{\sum_{k=0}^{K}\xi||{\bf{F}}{\bf{v}}_k^\star||^2+P_{\rm{C}}}
	=\underset{\left\{\{\beta_{k}\},\;\{{\bf{v}}_k\},\;t_0,\; \{t_k\}\right\}}{\rm{max}}\frac{\sum_{k=0}^{K}\log_2(1+t_k)}{\sum_{k=0}^{K}\xi||{\bf{F}}{\bf{v}}_k||^2+P_{\rm{C}}},
\end{eqnarray}
 where $\left\{\{\beta_{k}\}, \{{\bf{v}}_k\}, t_0, \{t_k\}\right\}$ should satisfy constraints (\ref{OptB1})-(\ref{OptB5}). Then, we apply the following Theorem.
 \begin{theorem}\label{theorem1}
 	The maximum EE $q^{\star}$ is obtained if an only if 
 	\begin{eqnarray}
 	\begin{aligned}
 	\underset{\left\{\{\beta_{k}\},\;\{{\bf{v}}_k\},\;t_0,\; \{t_k\}\right\}}{\rm{max}}\;\;&{\sum_{k=0}^{K}\log_2(1+t_k)}-q^\star\left({\sum_{k=0}^{K}\xi||{\bf{F}}{\bf{v}}_k||^2+P_{\rm{C}}}\right)\\
 	=&{\sum_{k=0}^{K}\log_2(1+t_k^\star)}-q^\star\left({\sum_{k=0}^{K}\xi||{\bf{F}}{\bf{v}}_k^\star||^2+P_{\rm{C}}}\right)\\
 	=&0,
 	\end{aligned}
 	\end{eqnarray}
 	where ${\sum_{k=0}^{K}\log_2(1+t_k)}\geq 0$ and ${\sum_{k=0}^{K}\xi||{\bf{F}}{\bf{v}}_k||^2+P_{\rm{C}}}>0$.
 \end{theorem}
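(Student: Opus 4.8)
The plan is to recognize this as the classical Dinkelbach parametric characterization of fractional programming and adapt its standard proof. Write $N(\mathbf{z}) = \sum_{k=0}^{K}\log_2(1+t_k)$ for the numerator and $D(\mathbf{z}) = \sum_{k=0}^{K}\xi\|\mathbf{F}\mathbf{v}_k\|^2 + P_{\mathrm{C}}$ for the denominator, where $\mathbf{z}$ collects all the optimization variables $\left\{\{\beta_{k}\}, \{\mathbf{v}_k\}, t_0, \{t_k\}\right\}$ and ranges over the feasible set $\mathcal{S}$ defined by constraints (\ref{OptB1})--(\ref{OptB5}). Note $N(\mathbf{z}) \geq 0$ on $\mathcal{S}$ since each $t_k \geq 0$ (forced by (\ref{OptB1})--(\ref{OptB2}) as the left-hand sides are nonnegative), and $D(\mathbf{z}) \geq P_{\mathrm{C}} > 0$. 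So the ratio is well-defined and $q^\star \geq 0$.

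First I would prove the ``only if'' direction together with the sign of the optimal value of the auxiliary problem. Let $\mathbf{z}^\star$ be the maximizer achieving $q^\star = N(\mathbf{z}^\star)/D(\mathbf{z}^\star)$. For any feasible $\mathbf{z}$ we have $N(\mathbf{z})/D(\mathbf{z}) \leq q^\star$, and since $D(\mathbf{z}) > 0$ this rearranges to $N(\mathbf{z}) - q^\star D(\mathbf{z}) \leq 0$; equality holds at $\mathbf{z} = \mathbf{z}^\star$ by construction. Taking the maximum over $\mathbf{z} \in \mathcal{S}$ therefore gives $\max_{\mathbf{z}} \{N(\mathbf{z}) - q^\star D(\mathbf{z})\} = 0$, attained at $\mathbf{z}^\star$, which is exactly the chain of equalities in the statement.

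Next I would prove the ``if'' direction: suppose there is some parameter $q$ (playing the role of $q^\star$) such that $\max_{\mathbf{z}\in\mathcal{S}} \{N(\mathbf{z}) - q D(\mathbf{z})\} = 0$, with the maximum attained at some $\tilde{\mathbf{z}}$. From $N(\mathbf{z}) - q D(\mathbf{z}) \leq 0$ for all feasible $\mathbf{z}$ and $D(\mathbf{z}) > 0$, we get $N(\mathbf{z})/D(\mathbf{z}) \leq q$ for all $\mathbf{z}$, so $q$ is an upper bound on the fractional objective. From $N(\tilde{\mathbf{z}}) - q D(\tilde{\mathbf{z}}) = 0$ we get $N(\tilde{\mathbf{z}})/D(\tilde{\mathbf{z}}) = q$, so the bound is attained. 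Hence $q$ is the maximum value of the fractional program, i.e. $q = q^\star$, and $\tilde{\mathbf{z}}$ is an optimal solution of (\ref{OptB}); this establishes the equivalence.

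The only real subtlety — and the step I would treat most carefully — is justifying that the relevant maxima are actually attained rather than being suprema, so that the equalities (as opposed to inequalities) in the statement are literally correct. This needs the feasible set to be nonempty (which the paper has already flagged as a feasibility assumption on $P_{\max}$ versus $E_k^{\min}$) and effectively compact after noting that (\ref{OptB5}) bounds the precoders, (\ref{OptB4}) bounds the $\beta_k$, and (\ref{OptB1})--(\ref{OptB2}) bound the $t_k$ from above; continuity of $N$ and $D$ then gives attainment. I would state this as a standing assumption and otherwise keep the argument at the level of the elementary rearrangement above, since the positivity $D(\mathbf{z}) > 0$ is what makes every inequality reversal valid and is the crux of why the Dinkelbach transform works here.
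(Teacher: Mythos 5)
Your proof is correct and is precisely the classical Dinkelbach argument that the paper itself relies on --- the paper's ``proof'' is merely a citation to Dinkelbach (1967), so you have simply written out in full the standard two-direction rearrangement (using $D(\mathbf{z})>0$ to reverse between $N/D\leq q^\star$ and $N-q^\star D\leq 0$) that the reference supplies. Your extra remark about attainment of the maxima via compactness of the feasible set is a legitimate subtlety the paper glosses over entirely.
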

\begin{proof}
	The proof of Theorem~\ref{theorem1} can follow the similar approach to the one in~\cite{eeproof}.
\end{proof}

As a result, we need to solve the following optimization problem for a given $q$
\setlength{\mathindent}{0cm}
\begin{subequations}\label{OptC}
	\begin{align}
   &\underset{\left\{\{\beta_{k}\},\;\{{\bf{v}}_k\},\;t_0,\; \{t_k\}\right\}}{\rm{max}}\;\;{\sum_{k=0}^{K}\log_2(1+t_k)}-q\sum_{k=0}^{K}\xi||{\bf{F}}{\bf{v}}_k||^2\label{OptC0}\\
    {\rm{s.t.}}\;\;&{\rm{(\ref{OptB1})-(\ref{OptB5})}},
	\end{align}
\end{subequations}
where $q$ can be regarded as a parameter and we can denote the optimal value of (\ref{OptC0}) as $T(q)$. To this end, we have the following definition according to Theorem~\ref{theorem1}
\begin{eqnarray}
q=q^\star \Leftrightarrow T(q)=0,
\end{eqnarray} 
which means that searching the root for the nonlinear equation $T(q)=0$ is equivalent to solve (\ref{OptB}). It can be found that $T(q)$ is a strictly decreasing and convex function with respect to $q$, where  $T(q)>0$ with $q\rightarrow -\infty$ and $T(q)<0$ with $q\rightarrow \infty$. Therefore, we can use the classical Bi-section method to find  $T(q)=0$, which is summarized as Algorithm~\ref{algorithm3}. 

\begin{algorithm}[t]
	{\caption{The Bi-section-Based EE Resource Allocation Algorithm.}
		\label{algorithm3}
		{\bf{Initialize}} $q_{\rm{s}}=0$, $q_{\rm{b}}\gg 0$ with $T(q_{\rm{s}})>0$ and $T(q_{\rm{b}})<0$, a small constant $\epsilon$.\\
		\Repeat{$|T(q_{\rm{m}})|<\epsilon$}{Update $q_{\rm{m}}\leftarrow(q_{\rm{s}}+q_{\rm{b}})/2$,\\
			Solve problem (\ref{OptC}) and obtain $T(q_{\rm{m}})$,\\
			$q_{\rm{s}}\leftarrow q_{\rm{m}}$ if $T(q_{\rm{m}})>0$, else $q_{\rm{b}}\leftarrow q_{\rm{m}}$.}
		
	}
\end{algorithm} 

Apparently,  (\ref{OptC}) is still a non-convex optimization problem. Next, we define two new variables $\mu_k=1/\beta_k$, $\omega_k=1/(1-\beta_k)$ and reformulate the following optimization problem
\setlength{\mathindent}{0cm}
\begin{subequations}\label{OptD}
	\begin{align}
&\underset{\left\{\{\mu_{k}\},\;\{\omega_k\},\;\{{\bf{v}}_k\},\;t_0,\; \{t_k\}\right\}}{\rm{max}}\;{\sum_{k=0}^{K}\log_2(1\!+\!t_k)}-\sum_{k=0}^{K}\xi||{\bf{F}}{\bf{v}}_k||^2\label{OptD0}\\
	{\rm{s.t.}}&{|\hat{{\bf{h}}}_k{\bf{v}}_0|^2}\geq t_0\left(\sum_{i=1}^K|\hat{{\bf{h}}}_k{\bf{v}}_i|^2+\delta_0^2+\mu_k\delta_1^2\right) , k\in\mathcal{K},\label{OptD1}\\
	&|\hat{{\bf{h}}}_k{\bf{v}}_k|^2\geq t_k\left(\sum_{i\neq k}^K|\hat{{\bf{h}}}_k{\bf{v}}_i|^2+\delta_0^2+\mu_k\delta_1^2\right), k\in\mathcal{K},\label{OptD2}\\
	&\sum_{i=0}^K|\hat{{\bf{h}}}_k{\bf{v}}_i|^2+\delta_0^2\geq \frac{E_{k}^{\rm{min}}}{\varepsilon}\omega_k,k\in\mathcal{K},\label{OptD3}\\
	&\left\| \begin{array}{c}
	\mu_k-\omega_k \\
	2\end{array}\right\|_2 \leq \mu_k+\omega_k-2, k\in\mathcal{K},\label{OptD4}\\
	&{\rm{(\ref{OptB4}),\;(\ref{OptB5})}}.\label{OptD5}
	\end{align}
\end{subequations}

Meanwhile, we have the following theorem.
\begin{theorem}
	The optimal solution of (\ref{OptD}) is also the optimal solution of (\ref{OptC}). 
\end{theorem}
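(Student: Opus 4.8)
The plan is to show that problems~(\ref{OptC}) and~(\ref{OptD}) are equivalent in the strong sense that they attain the same optimal value and that the substitution $(\mu_k,\omega_k)=(1/\beta_k,\,1/(1-\beta_k))$, with inverse $\beta_k=1/\mu_k$, carries optimal solutions of one problem to optimal solutions of the other. Two preliminary observations carry most of the load. First, the rotated second-order-cone constraint~(\ref{OptD4}) is, after squaring both sides and cancelling the common summand $4$, equivalent to $\mu_k\omega_k\ge\mu_k+\omega_k$, i.e.\ to $1/\mu_k+1/\omega_k\le 1$; moreover it forces $\mu_k>1$ and $\omega_k>1$ (from $\mu_k+\omega_k\ge 2$ and the squared inequality one first gets $\mu_k,\omega_k>0$, and then $\omega_k\le 1$ would make $\mu_k(\omega_k-1)\le0<\omega_k$, contradicting $\mu_k\omega_k\ge\mu_k+\omega_k$, and symmetrically for $\mu_k$). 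Second, writing $1/\mu_k+1/\omega_k\le 1$ as $\omega_k\ge 1/(1-1/\mu_k)$ shows that~(\ref{OptD4}) together with~(\ref{OptD3}) already implies the harvested-energy requirement that the splitting ratio $\beta_k=1/\mu_k$ would impose through~(\ref{OptB3}).

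For the direction ``(\ref{OptC})$\Rightarrow$(\ref{OptD})'' I would start from a feasible (in particular an optimal) point of~(\ref{OptC}); at such a point $\beta_k\in(0,1)$, because $E_{k}^{\rm{min}}>0$ rules out $\beta_k=1$ in~(\ref{OptB3}) and $\beta_k=0$ forces the common-message rate variable $t_0$, hence the objective, to zero. Setting $\mu_k=1/\beta_k$ and $\omega_k=1/(1-\beta_k)$, one verifies termwise that~(\ref{OptB1}),~(\ref{OptB2}),~(\ref{OptB3}) turn into exactly~(\ref{OptD1}),~(\ref{OptD2}),~(\ref{OptD3}) --- each manipulation being a multiplication or division by a strictly positive quantity (the SINR denominators, $\beta_k$, and $\varepsilon(1-\beta_k)$) --- while~(\ref{OptD4}) holds, indeed with equality, since $1/\mu_k+1/\omega_k=\beta_k+(1-\beta_k)=1$; constraint~(\ref{OptB5}) is unchanged, $\mu_k,\omega_k>1$ supplies the implicit positivity, and the objectives~(\ref{OptC0}) and~(\ref{OptD0}) coincide because neither depends on the power-splitting variable ($q$ being fixed throughout the inner subproblem). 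Hence the optimal value of~(\ref{OptD}) is no smaller than that of~(\ref{OptC}).

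For the converse I would take an optimal point $(\{\mu_k^\star\},\{\omega_k^\star\},\{{\bf{v}}_k^\star\},t_0^\star,\{t_k^\star\})$ of~(\ref{OptD}) and put $\beta_k^\star:=1/\mu_k^\star$; the first observation gives $\beta_k^\star\in(0,1)$, so~(\ref{OptB4}) holds. Multiplying~(\ref{OptD1}) and~(\ref{OptD2}) through by $\beta_k^\star>0$ recovers~(\ref{OptB1}) and~(\ref{OptB2}) verbatim, and by the second observation~(\ref{OptD3}) --- with $\omega_k^\star\ge 1/(1-\beta_k^\star)$ --- yields $\sum_{i=0}^{K}|\hat{{\bf{h}}}_k{\bf{v}}_i^\star|^2+\delta_0^2\ge E_{k}^{\rm{min}}/(\varepsilon(1-\beta_k^\star))$, which is~(\ref{OptB3}); constraint~(\ref{OptB5}) is unchanged and the objective takes the same value. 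Thus $(\{\beta_k^\star\},\{{\bf{v}}_k^\star\},t_0^\star,\{t_k^\star\})$ is feasible for~(\ref{OptC}) with the same objective, so the optimal value of~(\ref{OptC}) is no smaller than that of~(\ref{OptD}). Combining the two inequalities yields equality of the optimal values, and the explicit maps above show that an optimal solution of~(\ref{OptD}) is converted into an optimal solution of~(\ref{OptC}).

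The step that deserves the most care --- essentially the only place where the argument is more than a change of variables --- is the use of the \emph{inequality} in~(\ref{OptD4}) in place of the exact identity $1/\mu_k+1/\omega_k=1$ that the substitution satisfies. One must check that this relaxation is harmless in the ``(\ref{OptD})$\Rightarrow$(\ref{OptC})'' direction, i.e.\ that a point of~(\ref{OptD}) at which~(\ref{OptD4}) is slack still maps to a feasible point of~(\ref{OptC}); this is precisely the second observation, which relies on~(\ref{OptD3}) being monotone in $\omega_k$ (a larger $\omega_k$ is a \emph{stronger}, not a weaker, energy constraint, so enlarging $\omega_k$ beyond $1/(1-\beta_k)$ can only help). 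A minor additional point is the exclusion of the boundary splitting ratios $\beta_k\in\{0,1\}$ at an optimum, which uses $E_{k}^{\rm{min}}>0$ and the standing assumption that~(\ref{OptA}) is feasible.
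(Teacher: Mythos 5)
Your proof is correct, and it takes a genuinely different route from the paper's. The paper fixes a global optimum of (\ref{OptD}) and argues by contradiction that (\ref{OptD4}) must be \emph{tight} there: if $1/\mu_k^\star+1/\omega_k^\star<1$, rescaling $(\mu_k^\star,\omega_k^\star)$ by the factor $1/\mu_k^\star+1/\omega_k^\star$ restores equality, shrinks the interference-plus-noise terms in (\ref{OptD1})--(\ref{OptD2}) while relaxing (\ref{OptD3}), and therefore admits strictly larger $t_0,t_k$, contradicting optimality; equivalence with (\ref{OptC}) is then read off from the tight case via $\beta_k=1/\mu_k$. You instead prove a two-sided comparison of optimal values and never need tightness: your observation that $1/\mu_k+1/\omega_k\le 1$ forces $\mu_k,\omega_k>1$ and $\omega_k\ge 1/(1-1/\mu_k)$, so that even a \emph{slack} point of (\ref{OptD}) maps under $\beta_k=1/\mu_k$ to a feasible point of (\ref{OptC}) with the same objective, replaces the paper's contradiction argument, yields a slightly stronger statement (every feasible point of (\ref{OptD}) embeds into (\ref{OptC}), not just the optimum), and makes explicit the positivity facts the paper leaves implicit. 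The one step to tighten is your exclusion of $\beta_k=0$ at an optimum of (\ref{OptC}) in the converse direction: $\beta_k=0$ zeroes $t_0$ and $t_k$ for that user but not the entire objective, so you should either perturb such a point to $\beta_k=\epsilon>0$ (possible whenever (\ref{OptB3}) is not tight for that user) or note that this direction is only needed to certify that the mapped optimum of (\ref{OptD}) is not beaten within (\ref{OptC}); the paper sidesteps this by arguing in one direction only.
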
 
\begin{proof}
   According to (\ref{OptD4}), we have,
   \begin{eqnarray}\label{Ax}
   	\mu_k+\omega_k\leq \mu_k\omega_k.
   \end{eqnarray}
  Due to $\mu_k\omega_k\geq 0$, we divide (\ref{Ax}) by $\mu_k\omega_k$ and have
     \begin{eqnarray}\label{Ab}
  \frac{1}{\mu_k}+\frac{1}{\omega_k}\leq 1.
  \end{eqnarray}

Assume $\left\{\{\mu_{k}^\star\},\;\{\omega_k^\star\},\;\{{\bf{v}}_k^\star\},\;t_0^\star,\; \{t_k^\star\}\right\}$ represents a global optimization solution of (\ref{OptD}), when (\ref{OptD4}) is satisfied with equality, namely $1/\mu_k^\star+1/\omega_k^\star=1$, it is clear that problems (\ref{OptC}) and (\ref{OptD}) are equivalent and we only need to replace $1/\mu_k^\star$ with $\beta_k^\star$ for $k\in \mathcal{K}$. Otherwise, if $1/\mu_k^\star+1/\omega_k^\star<1$, we scale $\left\{\{\mu_{k}^\star\},\;\{\omega_k^\star\}\right\}$ by $(1/\mu_k^\star+1/\omega_k^\star)$ and have 
\begin{eqnarray}
\begin{aligned}
&\frac{1}{\mu_k^\star(1/\mu_k^\star+1/\omega_k^\star)}+\frac{1}{\omega_k^\star(1/\mu_k^\star+1/\omega_k^\star)}\\
=&\frac{\mu_{k}^\star}{\mu_k^\star+\omega_k^\star}+\frac{\omega_k^\star}{\mu_k^\star+\omega_k^\star}=1.
\end{aligned}
\end{eqnarray}

It means that (\ref{OptD4}) can satisfy the equality and the harvested power does not violate the constraint (\ref{OptD3}) due to $1/\mu_k^\star+1/\omega_k^\star<1$. As a result, the values of $\sum_{i=1}^K|\hat{{\bf{h}}}_k{\bf{v}}_i|^2+\delta_0^2+\mu_k\delta_1^2$ and $\sum_{i\neq k}^K|\hat{{\bf{h}}}_k{\bf{v}}_k|^2+\delta_0^2+\mu_k\delta_1^2$ in (\ref{OptD1}) and (\ref{OptD2}) will become smaller. Meanwhile, we can obtain a large $\{t_0,t_k\}$ than the optimal $\{t_0^\star,t_k^\star\}$,  which means a greater rate can be obtained for the same ${\bf{v}}_k$. It is contradictory with our original assumption, and we finish the proof.  
\end{proof}

Next, we need to solve (\ref{OptD}), which is still a non-convex optimization problem due to the non-convex constraints (\ref{OptD1})-(\ref{OptD3}).  Let $\{\hat{{\bf{v}}}_{k}\} (k\in\mathcal{K})$ is a feasible solution and then, we define ${{\bf{v}}}_{k}=\hat{{\bf{v}}}_{k}+\Delta{{\bf{v}}}_{k}$ and have
\begin{eqnarray}\label{SCA1}
\begin{aligned}
|\hat{{\bf{h}}}_{k}{\bf{v}}_{k}|^2=&(\hat{{\bf{v}}}_{k}+\Delta{{\bf{v}}}_{k})^H{\hat{\bf{H}}}_{k}(\hat{{\bf{v}}}_{k}+\Delta{{\bf{v}}}_{k})\\
\geq& 2{\rm{Re}}\{(\hat{{{\bf{v}}}}_k)^H{\hat{\bf{H}}}_{k}\Delta{{\bf{v}}}_{k}\}+(\hat{{{\bf{v}}}}_k)^H{\hat{\bf{H}}}_{k}\hat{{\bf{v}}}_{k},
\end{aligned}
\end{eqnarray}
where ${\hat{\bf{H}}}_{k}={\hat{\bf{h}}}_{k}^H{\hat{\bf{h}}}_{k}$, $\Delta{{\bf{v}}}_{k}={{\bf{v}}}_{k}-\hat{{\bf{v}}}_{k}$. In this case, $|\hat{{\bf{h}}}_{k}{\bf{v}}_{k}|^2$ can be replaced by its convex approximations and the formulated problem can be solved iteratively. Accordingly, (\ref{OptD1})-(\ref{OptD3}) can be transformed~as
\begin{eqnarray}
 2{\rm{Re}}\{(\hat{{{\bf{v}}}}_0)^H{\hat{\bf{H}}}_{k}\Delta{{\bf{v}}}_{0}\}\!+\!(\hat{{{\bf{v}}}}_0)^H{\hat{\bf{H}}}_{k}\hat{{\bf{v}}}_{0}
 \!\geq\! t_0\left(\sum_{i=1}^K|\hat{{\bf{h}}}_k{\bf{v}}_i|^2\!+\!\delta_0^2\!+\!\mu_k\delta_1^2\right),\\
  2{\rm{Re}}\{(\hat{{{\bf{v}}}}_k)^H{\hat{\bf{H}}}_{k}\Delta{{\bf{v}}}_{k}\}\!+\!(\hat{{{\bf{v}}}}_k)^H{\hat{\bf{H}}}_{k}\hat{{\bf{v}}}_{k}
  \!\geq\! t_k\left(\sum_{i\neq k}^K|\hat{{\bf{h}}}_k{\bf{v}}_i|^2\!+\!\delta_0^2\!+\!\mu_k\delta_1^2\right),\\
  \sum_{i=0}^K2{\rm{Re}}\{(\hat{{{\bf{v}}}}_i)^H{\hat{\bf{H}}}_{k}\Delta{{\bf{v}}}_{i}\}+(\hat{{{\bf{v}}}}_i)^H{\hat{\bf{H}}}_{k}\hat{{\bf{v}}}_{i}+\delta_0^2 \geq {E_{k}^{\rm{min}}}\omega_k/{\varepsilon}.\label{sca1}
\end{eqnarray}

Then, we set the new variables $\tau_k\geq\sum_{i=1}^K|\hat{{\bf{h}}}_k{\bf{v}}_i|^2+\delta_0^2+\mu_k\delta_1^2$ and $\lambda_k\geq \sum_{i\neq k}^K|\hat{{\bf{h}}}_k{\bf{v}}_i|^2+\delta_0^2+\mu_k\delta_1^2$, and reformulate the following optimization as
\setlength{\mathindent}{0cm}
	\begin{subequations}\label{OptE}
	\begin{align}
	&\underset{\left\{\{\mu_{k}\},\;\{\omega_k\},\;\{\tau_k\},\;\{\lambda_k\},\;\{{\bf{v}}_k\},\;t_0,\; \{t_k\}\right\}}{\rm{max}}\;{\sum_{k=0}^{K}\log_2(1\!+\!t_k)}\!-\!q\sum_{k=0}^{K}\xi||{\bf{F}}{\bf{v}}_k||^2\label{OptE0}\\
	&{\rm{s.t.}}\;\; 2{\rm{Re}}\{(\hat{{{\bf{v}}}}_0)^H{\hat{\bf{H}}}_{k}\Delta{{\bf{v}}}_{0}\}+(\hat{{{\bf{v}}}}_0)^H{\hat{\bf{H}}}_{k}\hat{{\bf{v}}}_{0}\geq t_0\tau_k, k\in\mathcal{K},\label{OptE1}\\
&\;\;\;\;\;\sum_{i=1}^K|\hat{{\bf{h}}}_k{\bf{v}}_i|^2+\delta_0^2+\mu_k\delta_1^2\leq \tau_k,k\in\mathcal{K},\label{OptE2}\\
&\;\;\;\;\;2{\rm{Re}}\{(\hat{{{\bf{v}}}}_k)^H{\hat{\bf{H}}}_{k}\Delta{{\bf{v}}}_{k}\}+(\hat{{{\bf{v}}}}_k)^H{\hat{\bf{H}}}_{k}\hat{{\bf{v}}}_{k}\geq t_k\lambda_k, k\in\mathcal{K},\label{OptE3}\\
&\;\;\;\;\;\sum_{i\neq k}^K|\hat{{\bf{h}}}_k{\bf{v}}_i|^2+\delta_0^2+\mu_k\delta_1^2\leq \lambda_k,k\in\mathcal{K},\label{OptE4}\\
&\;\;\;\;\;{\rm{(\ref{OptB4}),\;(\ref{OptB5}),\;(\ref{OptD4}),\;(\ref{sca1})}}.
	\end{align}
\end{subequations}

\begin{algorithm}[t]
	{\caption{The Joint Digital Precoding and Power Splitting Ratio Iterative Algorithm.}
		\label{algorithm4}
		{\bf{Initialize}} $\{\tau_k^{[i-1]}\}, \{\lambda_k^{[i-1]}\}, \{{\bf{v}}_k^{[i-1]}\}, t_0^{[i-1]},\{t_k^{[i-1]}\}$, $i=1$, the maximum iteration times $T_{\rm{max}}$.\\
		\Repeat{$i=T_{\rm{max}}$}{
			Solve the optimization problem (\ref{OptF}) and obtain the optimal $\{\mu_{k}^{[i]}\},\;\{\omega_k^{[i]}\},\;\{\tau_k^{[i]}\},\;\{\lambda_k^{[i]}\},\;\{{\bf{v}}_k^{[i]}\},\;t_0,\; \{t_k^{[i]}\}$.\\
			Update $i\leftarrow i+1$.}
	}
\end{algorithm} 

So far, the only non-convex constraints are (\ref{OptE1}) and (\ref{OptE3}) in (\ref{OptE}). Similar to~\cite{SCA2}, we can obtain the upper of $t_0\tau_k$ and $t_k\lambda_k$ as
\begin{eqnarray}\label{sca4}
	\frac{t_0^{[i-1]}}{2\tau_k^{[i-1]}}\tau_k^2+\frac{\tau_k^{[i-1]}}{2t_0^{[i-1]}}t_0^2\geq t_0\tau_k,\;\;
\frac{t_k^{[i-1]}}{2\lambda_k^{[i-1]}}\lambda_k^2+\frac{\lambda_k^{[i-1]}}{2t_k^{[i-1]}}t_k^2\geq t_k\lambda_k,
\end{eqnarray}
where $\{t_0^{[i-1]},\{\tau_k^{[i-1]}\},\{t_k^{[i-1]}\},\{\lambda_k^{[i-1]}\}\}$ are the value of $\{t_0,\{\tau_k\},\{t_k\},\{\lambda_k\}\}$ at the $(i\!-\!1)$th iteration. Finally, we formulate the following optimization problem 
\setlength{\mathindent}{0cm}
\begin{subequations}\label{OptF}
	\begin{align}
	&\underset{\left\{\{\mu_{k}\},\;\{\omega_k\},\;\{\tau_k\},\;\{\lambda_k\},\;\{{\bf{v}}_k\},\;t_0,\; \{t_k\}\right\}}{\rm{max}}\;{\sum_{k=0}^{K}\log_2(1\!+\!t_k)}\!-\!q\sum_{k=0}^{K}\xi||{\bf{F}}{\bf{v}}_k||^2\label{OptF0}\\
	&{\rm{s.t.}}\;\;2{\rm{Re}}\{(\hat{{{\bf{v}}}}_0)^H{\hat{\bf{H}}}_{k}\Delta{{\bf{v}}}_{0}\}\!+\!(\hat{{{\bf{v}}}}_0)^H{\hat{\bf{H}}}_{k}\hat{{\bf{v}}}_{0}\!\geq\! \frac{t_0^{[i-1]}}{2\tau_k^{[i-1]}}\tau_k^2\!+\!\frac{\tau_k^{[i-1]}}{2t_0^{[i-1]}}t_0^2,\label{OptF1}\\
	&\;\;\;\;\;\;2{\rm{Re}}\{(\hat{{{\bf{v}}}}_k)^H{\hat{\bf{H}}}_{k}\Delta{{\bf{v}}}_{k}\}\!+\!(\hat{{{\bf{v}}}}_k)^H{\hat{\bf{H}}}_{k}\hat{{\bf{v}}}_{k}\!\geq\! \frac{t_k^{[i-1]}}{2\lambda_k^{[i-1]}}\lambda_k^2\!+\!\frac{\lambda_k^{[i-1]}}{2t_k^{[i-1]}}t_k^2,\label{OptF3}\\
	&\;\;\;\;\;\;{\rm{(\ref{OptB4}),\;(\ref{OptB5}),\;(\ref{OptD4}),\;(\ref{sca1}),\;(\ref{OptE2}),\;(\ref{OptE4})}}.\label{OptF4}
	\end{align}
\end{subequations}

So far, it is clear that (\ref{OptF}) is a convex optimization problem, which can be solved by standard convex optimization technique, e.g., interior-point method or CVX tool box. Summarily, solving the original problem (\ref{OptC}), we need to iteratively solve the optimal values of $\left\{\{\mu_{k}\}, \{\omega_k\}, \{\tau_k\}, \{\lambda_k\}, \{{\bf{v}}_k\}, t_0,\{t_k\}\right\}$ via (\ref{OptF}). In addition, since obtained $\left\{\{\mu_{k}^{[i]}\}, \{\omega_k^{[i]}\}, \{\tau_k^{[i]}\}, \{\lambda_k^{[i]}\}, \{{\bf{v}}_k^{[i]}\}, t_0^{[i]},\{t_k^{[i]}\}\right\}$ are the optimal solutions at the $i$th iteration, iteratively updating these variables will increase or maintain the value of the objective function (\ref{OptF0}). To this end, the proposed iterative algorithm will converge to at least a local optimal solution, which is summarized in Algorithm~\ref{algorithm4}.

Next, we analyze the complexity of Algorithm~\ref{algorithm4}. In fact, (\ref{OptF}) is a second-order cone optimization problem, and the worst-case complexity of solving  (\ref{OptF}) with second-order cone form is $\mathcal{O}([KN_{\rm{TX}}+5K+1]^{3.5})$~\cite{cone}, where $KN_{\rm{TX}}+5K+1$ denotes the number of variables.

\section{ZF-Based Low-complexity Algorithm}
To reduce the complexity for solving (\ref{OptC}), we propose a ZF-based low complexity algorithm. Specifically,  we first apply the  ZF technique to cancel the inter-unicast interference. Let ${\bf{H}}=[\hat{{\bf{h}}}_1^T,\hat{{\bf{h}}}_2^T,\dots,\hat{{\bf{h}}}_K^T]^T$, which includes the equivalent downlink channel from the BS to all $K$ users. Then, the precoding matrix can be written as ${\bf{V}}={\bf{H}}^H({\bf{H}}{\bf{H}}^H)^{-1}$, and the digital precoding for unicast  of the $k$th user can be expressed as ${\bf{v}}_k={\bf{V}}_k/\|{\bf{F}}{\bf{V}}_k\|$, where ${\bf{V}}_k$ denotes the $k$th row of ${\bf{V}}$.

After applying ZF precoding, the multi-user interference can be removed from (\ref{sinr1}) and (\ref{sinr2}), and the user's SINR can be rewritten as
\begin{eqnarray}\label{sinr11}
\gamma_k^0=\frac{\beta_{k}|\hat{{\bf{h}}}_k{\bf{v}}_0|^2}{\beta_{k}\left(p_k|\hat{{\bf{h}}}_k{\bf{v}}_k|^2+\delta_0^2\right)+\delta_1^2}, k\in\mathcal{K},
\end{eqnarray} 
and
\begin{eqnarray}\label{sinr21}
\gamma_k=\frac{\beta_{k}p_k|\hat{{\bf{h}}}_k{\bf{v}}_k|^2}{\beta_{k}\delta_0^2+\delta_1^2}, k\in\mathcal{K},
\end{eqnarray}
where $p_k$ stands for the unicast transmit power for the $k$th user. In this case, we only need to optimize the transmit power $\{p_k\}$ and multicast precoding ${\bf{v}}_0$. Next, we define two new variables  $g_k$ and $o_k$, where $g_k\geq \frac{1}{\beta_k}$ and $o_k\geq \frac{E_{k}^{\rm{min}}}{\varepsilon(1-\beta_k)}$. As a result, the optimization problem (\ref{OptC}) can be rewritten as
\setlength{\mathindent}{0cm}
\begin{subequations}\label{OptH}
	\begin{align}
	&\underset{\left\{\{\beta_{k}\},\;\{{{p}}_k\},{\bf{v}}_0\;t_0,\; \{t_k\}\right\}}{\rm{max}}\;\;{\sum_{k=0}^{K}\log_2(1\!+\!t_k)}\!-\!q\xi(\sum_{k=1}^{K}p_k+\|{\bf{F}}{\bf{v}}_0\|^2)\label{OptH0}\\
	{\rm{s.t.}}\;\;&\frac{|\hat{{\bf{h}}}_k{\bf{v}}_0|^2}{p_k|\hat{{\bf{h}}}_k{\bf{v}}_k|^2+\delta_0^2+g_k\delta_1^2}\geq t_0, k\in\mathcal{K},\label{OptH1}\\
	&\frac{p_k|\hat{{\bf{h}}}_k{\bf{v}}_k|^2}{\delta_0^2+g_k\delta_1^2}\geq t_k, k\in\mathcal{K},\label{OptH2}\\
	&p_k|\hat{{\bf{h}}}_k{\bf{v}}_k|^2+|\hat{{\bf{h}}}_k{\bf{v}}_0|^2+\delta_0^2\geq o_{k},k\in\mathcal{K},\label{OptH3}\\
	&g_k\geq \frac{1}{\beta_k}, k\in\mathcal{K},\label{OptH4}\\
	&o_k\geq \frac{E_{k}^{\rm{min}}}{\varepsilon(1-\beta_k)}, k\in\mathcal{K},\label{OptH5}\\
	&\|{\bf{F}}{\bf{v}}_0\|^2+\sum_{k=1}^Kp_k\leq P_{\rm{max}},\label{OptH6}\\
	&{\rm{(\ref{OptB4})}}.
	\end{align}
\end{subequations}

One can observe that (\ref{OptH1})-(\ref{OptH5}) are all non-convex constraints. Next, we will apply some approximation techniques to transform them into convex ones. 

Firstly, combining with (\ref{SCA1}), (\ref{OptH3}) can  be  transformed into the convex constraint as
\begin{eqnarray}\label{con41}
p_k|\hat{{\bf{h}}}_k{\bf{v}}_k|^2+2{\rm{Re}}\{(\hat{{{\bf{v}}}}_0)^H{\hat{\bf{H}}}_{k}\Delta{{\bf{v}}}_{0}\}+(\hat{{{\bf{v}}}}_0)^H{\hat{\bf{H}}}_{k}\hat{{\bf{v}}}_{0}+\delta_0^2\geq o_k, k\in\mathcal{K}.
\end{eqnarray}

To deal with the non-convex constraints (\ref{OptH4}) and (\ref{OptH5}), we apply the Schur complement lemma~\cite{RefDai} and transform them into the following convex matrix form  constraints,
\begin{eqnarray}\label{Con1}
\left[ \begin{array}{ccc}
g_{k} & 1 \\
1 & \beta_{k} 
\end{array} 
\right ]\geq {\bf{0}}, k\in\mathcal{K},
\end{eqnarray} 
and
\begin{eqnarray}\label{Con2}
\left[ \begin{array}{ccc}
o_{k} & \sqrt{E_{k}^{\rm{min}}}\\
\sqrt{E_{k}^{\rm{min}}} & \varepsilon(1-\beta_{k}) 
\end{array} 
\right ]\geq {\bf{0}}, k\in\mathcal{K}.
\end{eqnarray} 

\begin{algorithm}[t]
	{\caption{The ZF-Based Low-Complexity Algorithm.}
		\label{algorithm5}
		{\bf{Initialize}} $\{g_k^{[i-1]}\}, \{t_k^{[i-1]}\}, \{{\bf{v}}_0^{[i-1]}\}$, $i=1$, the maximum iteration times $T_{\rm{max}}$.\\
		\Repeat{$i=T_{\rm{max}}$}{
			Solve the optimization problem (\ref{OptG}) and obtain the optimal $\{\beta_k^{[i]}\}, \{p_k^{[i]}\}, \{g_k^{[i]}\}, \{{\bf{v}}_0^{[i]}\}, t_0^{[i]},\{t_k^{[i]}\}$.
			Update $i\leftarrow i+1$.\\}
	}
\end{algorithm}

In addition, we find that the left side of (\ref{OptH1}) is a quadratic-over-affine function, which is jointly convex with respect to $\{{\bf{v}}_0,p_k,g_k\}$. Based on this, we define $f({\bf{v}}_0,p_k,g_k)=\frac{|\hat{{\bf{h}}}_k{\bf{v}}_0|^2}{p_k|\hat{{\bf{h}}}_k{\bf{v}}_k|^2+\delta_0^2+g_k\delta_1^2}, k\in\mathcal{K}$ and the first order Taylor series expansion of  $f({\bf{v}}_0,p_k,g_k)$ can be written as
\begin{eqnarray}
\begin{aligned}
	f({\bf{v}}_0,p_k,g_k)\approx\frac{2\left(\hat{{\bf{h}}}_k{\bf{v}}_0^{[i-1]}\right)^\ast}{\Gamma_k^{[i-1]}}\hat{{\bf{h}}}_k{\bf{v}}_0
	-\frac{|\hat{{\bf{h}}}_k{\bf{v}}_0^{[i-1]}|^2}{(\Gamma_k^{[i-1]})^2}\Gamma_k,
\end{aligned}
\end{eqnarray}
where $\Gamma_k=p_k|\hat{{\bf{h}}}_k{\bf{v}}_k|^2+\delta_0^2+g_k\delta_1^2$ and $\Gamma_k^{[i-1]}=p_k^{[i-1]}|\hat{{\bf{h}}}_k{\bf{v}}_k|^2+\delta_0^2+g_k^{[i-1]}\delta_1^2$. In addition, ${\bf{v}}_0^{[i-1]}$, $p_k^{[i-1]}$ and $g_k^{[i-1]}$ denote the value of ${\bf{v}}_0$, $p_k$ and $g_k$ at the $[i-1]$th iteration, respectively. As a result, (\ref{OptH1}) can be written as the following convex constraint
\begin{eqnarray}\label{con46}
	\frac{2\left(\hat{{\bf{h}}}_k{\bf{v}}_0^{[i-1]}\right)^\ast}{\Gamma_k^{[i-1]}}\hat{{\bf{h}}}_k{\bf{v}}_0
	-\frac{|\hat{{\bf{h}}}_k{\bf{v}}_0^{[i-1]}|^2}{(\Gamma_k^{[i-1]})^2}\Gamma_k\geq t_0, k\in\mathcal{K}.
\end{eqnarray}

Finally, (\ref{OptH2}) can be expressed as
 \begin{eqnarray}\label{hao3}
 \frac{|\hat{{\bf{h}}}_k{\bf{v}}_k|^2}{\delta_1^2}p_k-\frac{\delta_0^2}{\delta_1^2}t_k\geq t_kg_k, k\in\mathcal{K}.
 \end{eqnarray}
 Similar to (\ref{sca4}), we can obtain the upper bound of $t_kg_k$ as
 \begin{eqnarray}\label{sca5}
 \frac{t_k^{[i-1]}}{2g_k^{[i-1]}}g_k^2+\frac{g_k^{[i-1]}}{2t_k^{[i-1]}}t_k^2\geq t_kg_k,
 \end{eqnarray}
 where $t_k^{[i-1]}$ and $g_k^{[i-1]}$, respectively, are the value of $t_k$ and $g_k$ at the $[i-1]$th iteration.  After that, $(\ref{OptH2})$ can be transformed into the following convex constraint
  \begin{eqnarray}\label{hao4}
 \frac{|\hat{{\bf{h}}}_k{\bf{v}}_k|^2}{\delta_1^2}p_k-\frac{\delta_0^2}{\delta_1^2}t_k\geq  \frac{t_k^{[i-1]}}{2g_k^{[i-1]}}g_k^2+\frac{g_k^{[i-1]}}{2t_k^{[i-1]}}t_k^2, k\in\mathcal{K}.
 \end{eqnarray}
Accordingly, we need to iteratively solve the following convex optimization problem.
\setlength{\mathindent}{0cm}
\begin{subequations}\label{OptG}
	\begin{align}
	&\underset{\left\{\{\beta_{k}\},\;\{{{p}}_k\},\;\{{{g}}_k\}, {\bf{v}}_0\;t_0,\; \{t_k\}\right\}}{\rm{max}}\;\;{\sum_{k=0}^{K}\log_2(1\!+\!t_k)}\!-\!q\xi(\sum_{k=1}^{K}p_k+\|{\bf{F}}{\bf{v}}_0\|^2)\label{OptG0}\\
	&{\rm{s.t.}}\;\;
	{\rm{(\ref{OptB4}), (\ref{con41}), (\ref{Con1}), (\ref{Con2}), (\ref{con46}), (\ref{hao4})}}.
	\end{align}
\end{subequations}

\begin{figure}[t]
	\begin{center}
		\includegraphics[width=9cm,height=6.5cm]{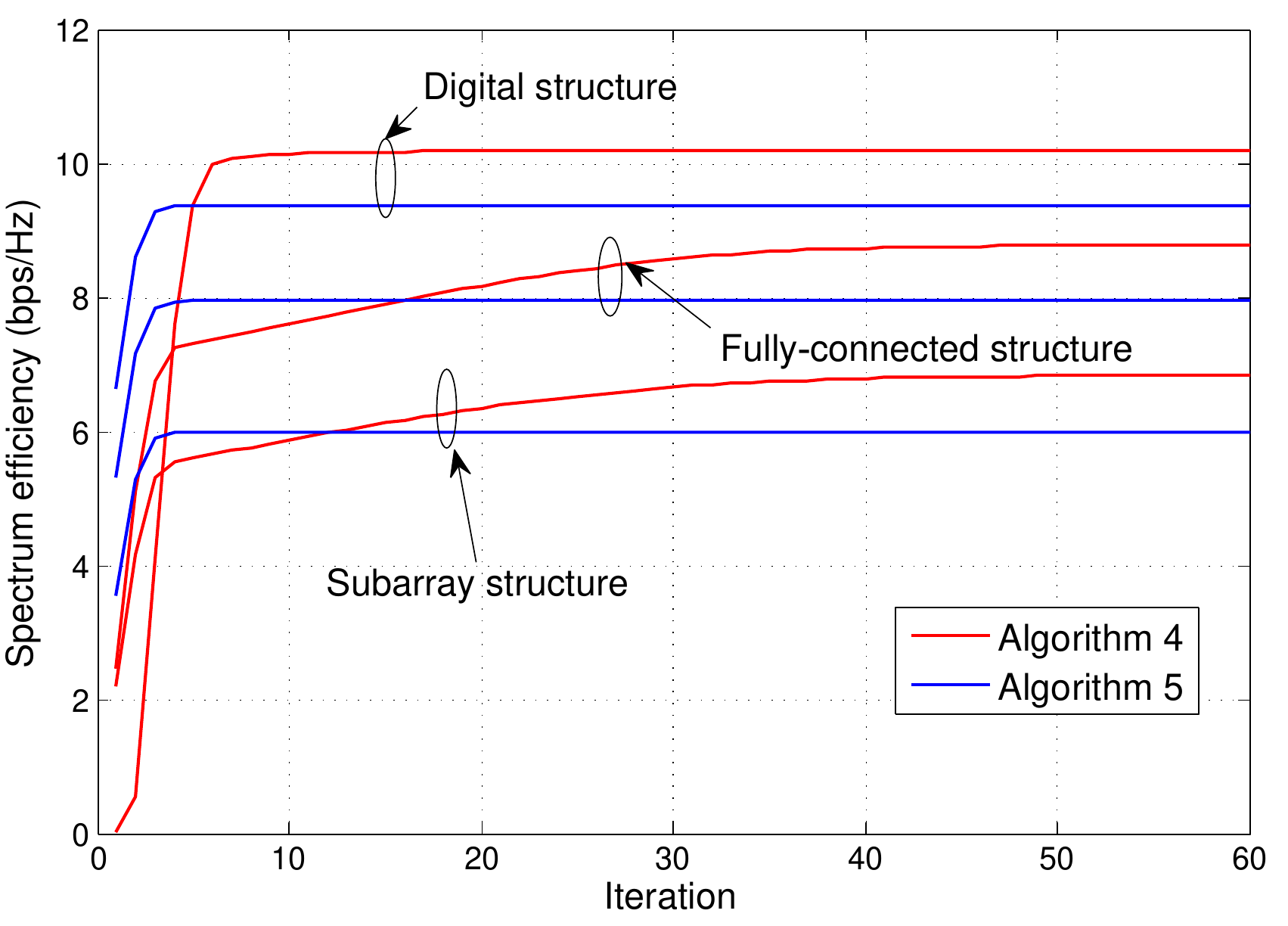}
		\caption{Spectrum efficiency versus iteration.}
		\label{fig3}
	\end{center}
\end{figure}
\begin{figure}[t]
	\begin{center}
		\includegraphics[width=9cm,height=6.5cm]{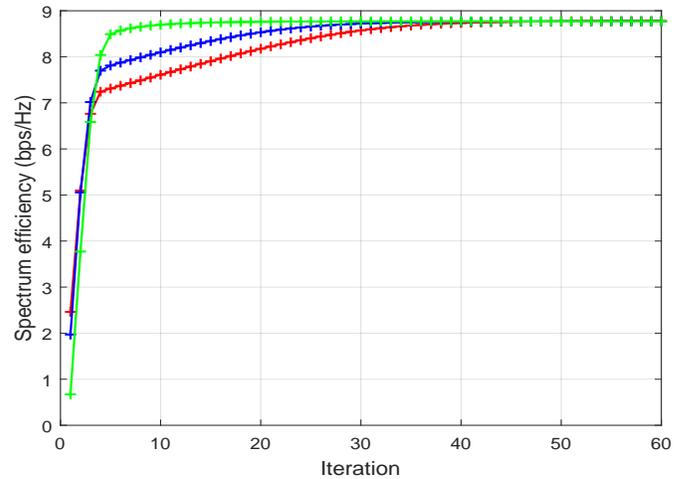}
		\caption{Energy efficiency versus iteration under different initial values.}
		\label{fig4A}
	\end{center}
\end{figure}

We can solve the above problem by convex solvers (e.g., CVX), which is summarized as Algorithm~\ref{algorithm5}. Similar to (\ref{OptF}), the worst-case complexity of solving  (\ref{OptG}) with second-order cone form is $\mathcal{O}([N_{\rm{TX}}+4K+1]^{3.5})<\mathcal{O}([KN_{\rm{TX}}+5K+1]^{3.5})$, where $N_{\rm{TX}}+4K+1$ denotes the number of variables in~(\ref{OptG}).

{Summarily, compared with the  unicast beamforming design, the significant challenges for joint consideration of multicast and unicast includes the following two aspects: $(\it{a})$ The energy efficiency (EE) is defined as
$\eta_{\rm{EE}}\!=\!\frac{\underset{\forall k}{\min}\{\log_2(1\!+\!\gamma_k^0)\}\!+\!\sum_{k=1}^{K}\log_2(1\!+\!\gamma_k)}{P_{\rm{total}}},$
where $\underset{\forall k}{\min}\{\log_2(1\!+\!\gamma_k^0)\}$ denotes the multicast rate. Then, we transform $\eta_{\rm{EE}}$  into a tractable one given by $\frac{\sum_{k=0}^{K}\log_2(1+t_k)}{P_{\rm{total}}}$ via bringing auxiliary variables $t_0$ and $t_k$, and then advanced convex approximated techniques are applied and an iterative algorithm is proposed, i.e, Algorithm 4. 
$(\it{b})$ For joint multicast-unicast transmission, when ZF technique is used to cancel the inter-unicast interference, we still need to design multicast beamforming such as our proposed Algorithm~5, which is  challenging and different from the unicast beamforming design.}

\section{Simulation Results}
In this section, simulation results are provided to illustrate the effectiveness of the proposed algorithms. We assume that the BS has a coverage of 30 meters, and the path loss is modeled as $69.4+24\log_{10}(D)$ dB, where $D$ denotes the distance in meter. We assume that there are 8 paths for the mmWave channel, and the azimuth angle of departure at BS is uniformly distributed over $[0, 2\pi]$. The BS is equipped with $N_{\rm{TX}}=256$ antennas and $N_{\rm{RF}}=4$ RF chains, where we set $d=\lambda/2$. The noise power, $\delta_0^2$ and $\delta_1^2$, are set -80 dBm and -60 dBm, respectively. The energy conversion efficiency $\varepsilon$  and the inefficiency of the power amplifier $\xi$ are set as 0.5 and 0.38, respectively. In addition, we set $P_{\rm{BB}}\!=\!200$ mW, $P_{\rm{RF}}\!=\!300$ mW, $P_{\rm{PS}}\!=\!40$ mW~\cite{Ref26a}. Meanwhile, the minimum harvested energy is $E_k^{\rm{min}}=100$ $\mu$W for all users~\cite{Ref26a},~\cite{Gen}, the number of users is set as $K=2$.

{Fig.~\ref{fig3} shows the convergence performance of the proposed two algorithms (e.g., Algorithms~\ref{algorithm4} and~\ref{algorithm5}) under different antenna structures, including digital structure (e.g., each antenna is connected to a dedicated RF chain), fully-connected structure and subarray structure, where we set $q=0$ and $P_{\rm{max}}=30$~dBm. It is clear that although the SE under Algorithm~\ref{algorithm4} is higher than that under Algorithm~\ref{algorithm5}, Algorithm~\ref{algorithm5} can converge speedily after 5 iterations, while Algorithm~\ref{algorithm4} needs about 50 iterations. Therefore, the ZF-based method can speedily obtain the solutions of the problem with a small performance loss. In addition, it can also be found that the SE under digital structure is the highest compared to another two structures, but its energy consumption and hardware complexity are high.} {In addition, Fig.~\ref{fig4A} shows the influence of different initial values on its solutions, where we consider the Algorithm~5 and fully-connected structure. From this figure, we can obtain that the proposed algorithm always converges to the same point under different initial values. However, initialization does slightly influence the convergence speed.}

\begin{figure}[t]
	\begin{center}
		\includegraphics[width=9cm,height=6.5cm]{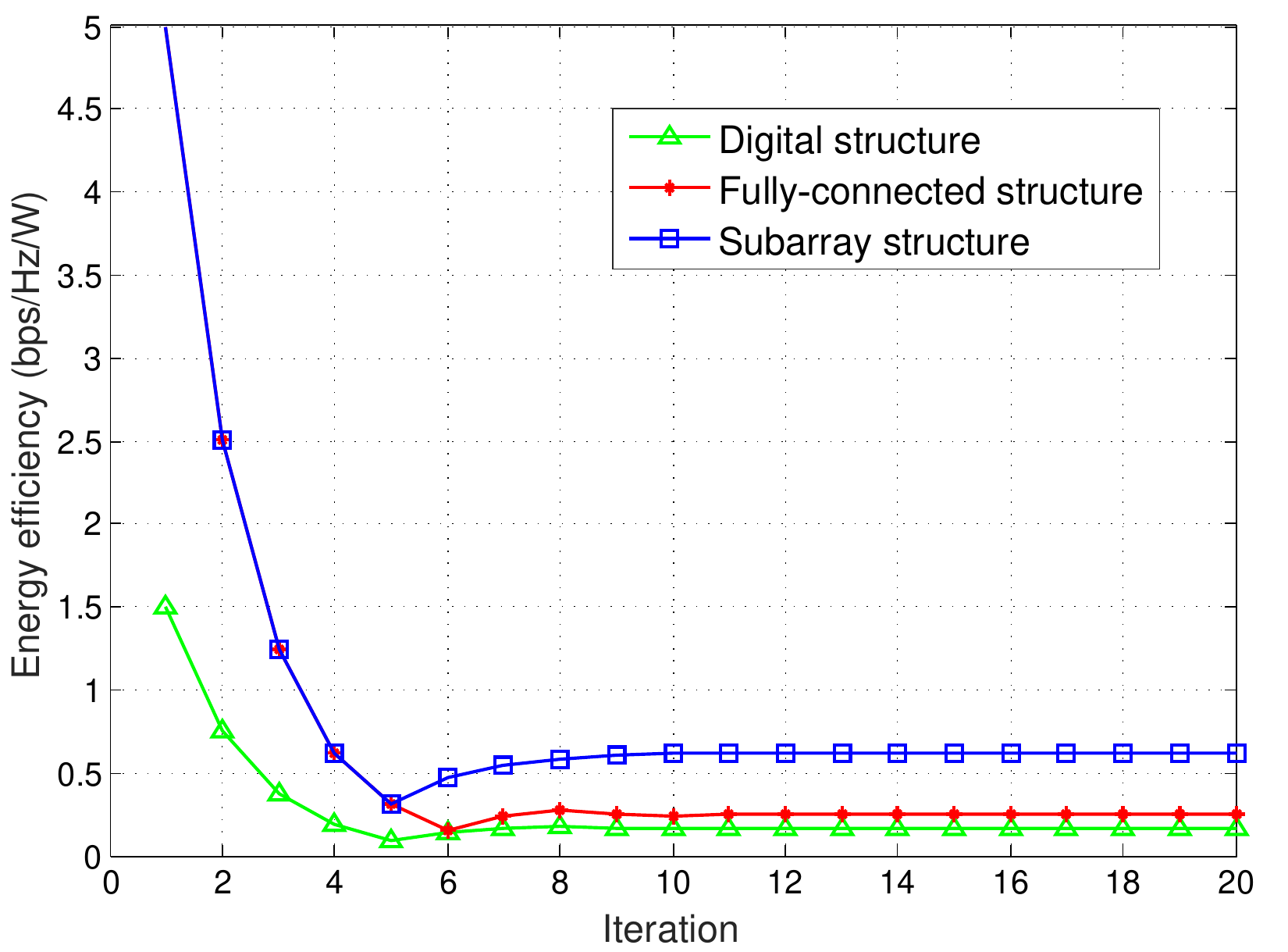}
		\caption{Energy efficiency versus iteration.}
		\label{fig4}
	\end{center}
\end{figure}
\begin{figure}[t]
	\begin{center}
		\includegraphics[width=9cm,height=6.5cm]{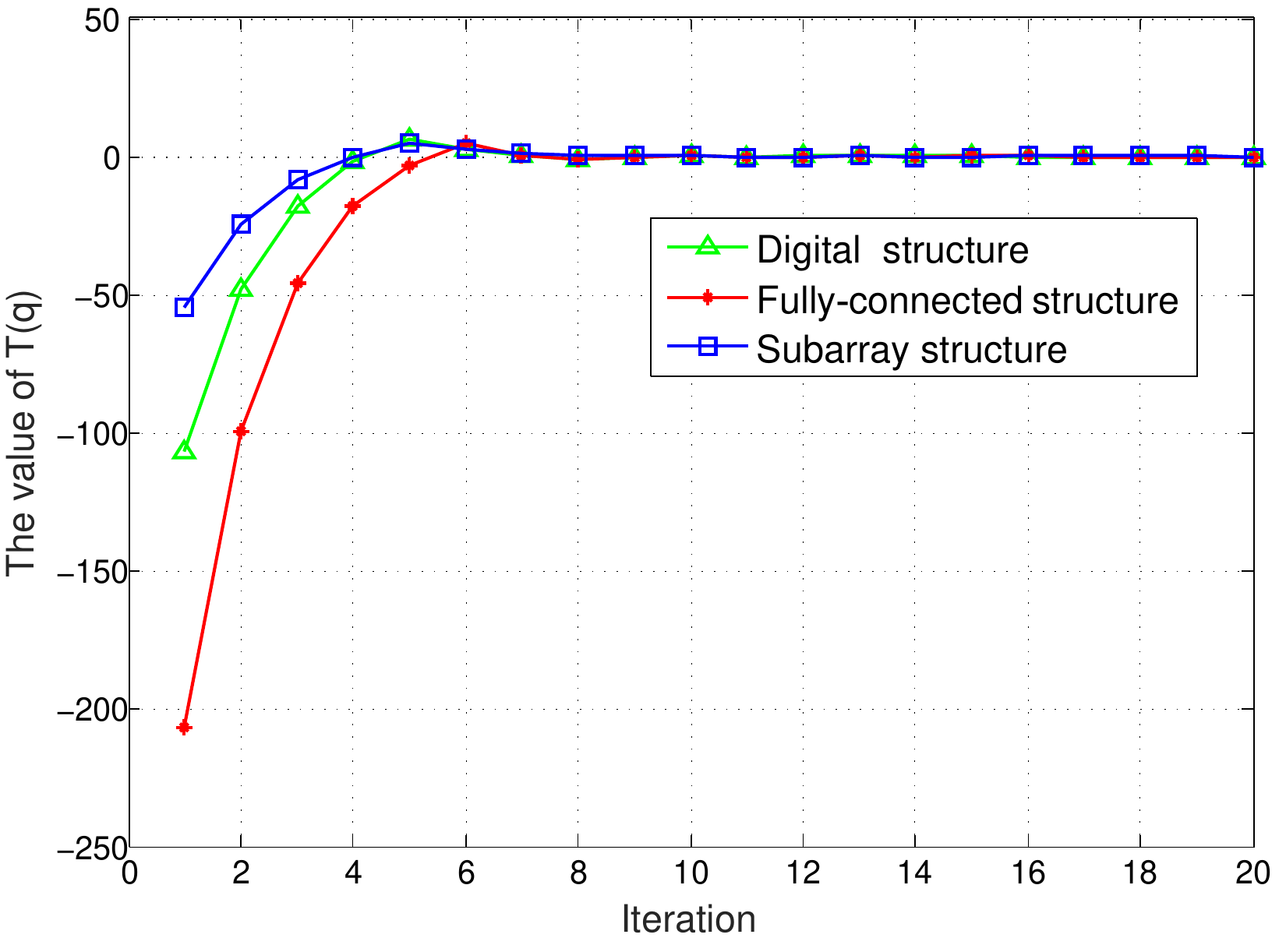}
		\caption{The value of $T(q)$ versus iteration.}
		\label{fig5}
	\end{center}
\end{figure}
\begin{figure}[t]
	\begin{center}
		\includegraphics[width=9cm,height=6.5cm]{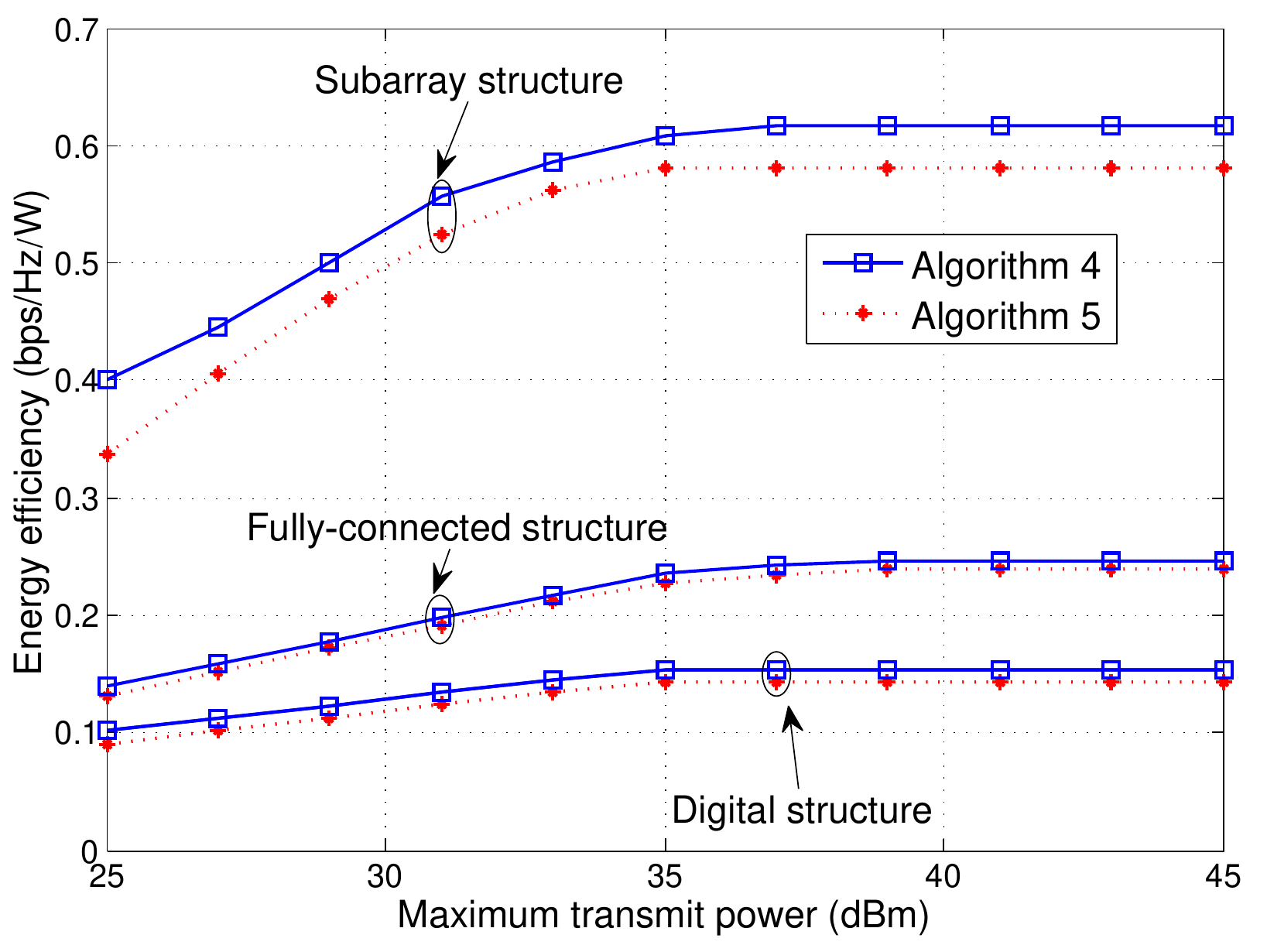}
		\caption{Energy efficiency versus maximum transmit power at the BS.}
		\label{fig6}
	\end{center}
\end{figure}
\begin{figure}[t]
	\begin{center}
		\includegraphics[width=9cm,height=6.5cm]{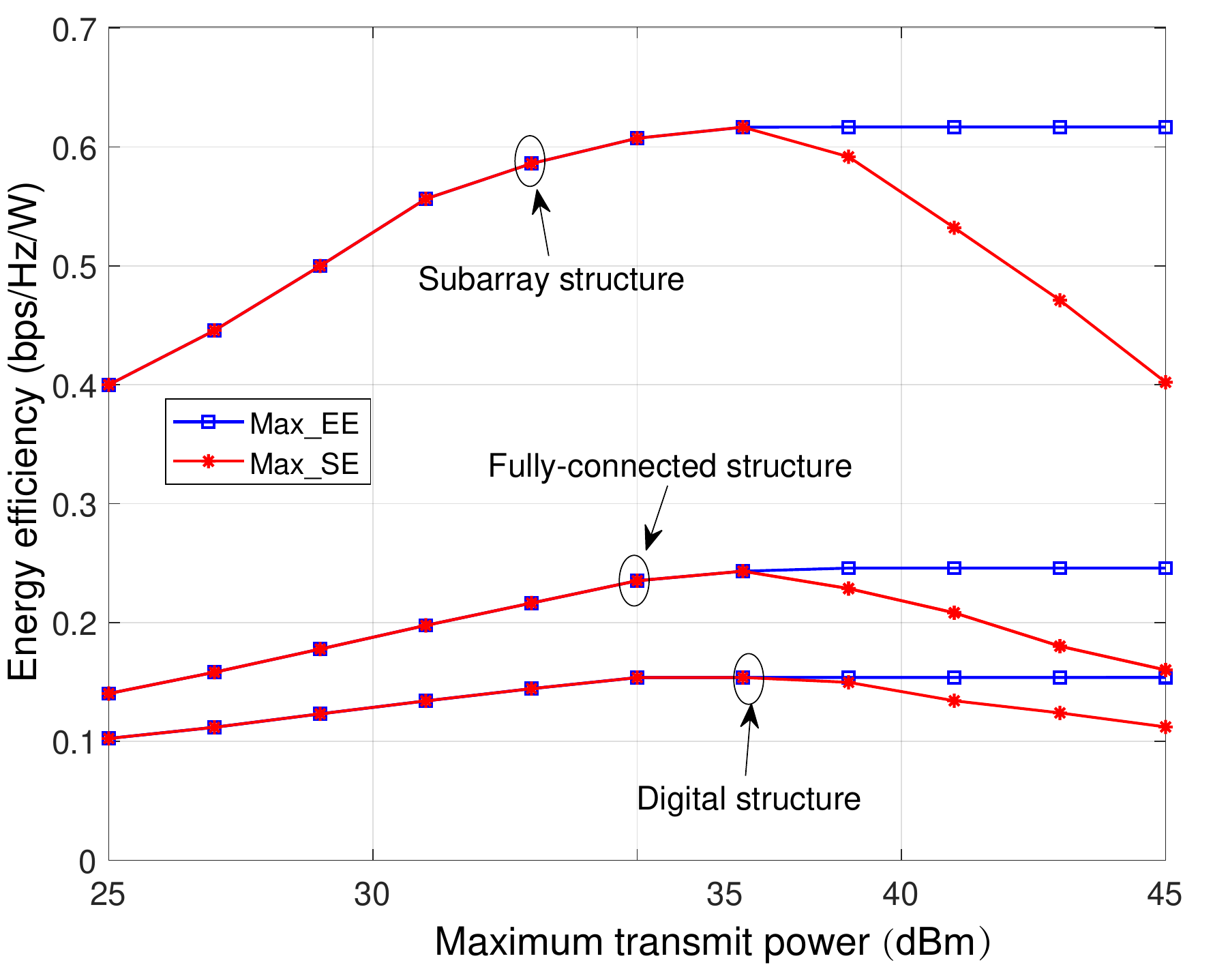}
		\caption{Energy efficiency versus maximum transmit power at the BS under different optimization schemes.}
		\label{fig8}
	\end{center}
\end{figure}

Figs.~\ref{fig4} and~\ref{fig5} show the convergence property of Algorithm~\ref{algorithm3} under different antenna structures, and we set $P_{\rm{max}}=40$ dBm. We solve problem~(\ref{OptC}) via   Algorithm~\ref{algorithm4}.  From Fig.~\ref{fig4}, one can observe that the EE trends to converge after 8 iterations. In addition, one can see that the EE under subarray structure is higher than that under another two structures. This is because its circuit power consumption is low due to the small number of RF chains and phase shifters.  In addition,  the value of $T(q)$ must be zero according to Theorem~\ref{theorem1}, and Fig.~\ref{fig4} also verifies this point.

{Fig.~\ref{fig6} shows the EE versus maximum transmit power at the BS with $N_{\rm{RF}}=4$. We can observe that the EE first increases and then saturates as $P_{\rm{max}}$ increases. It is understandable that a larger transmit power can obtain a higher SE, but the improved ratio will be lower and lower as transmit power increases. Therefore, the EE will reach the point of diminishing returns when the transmit power continues to increase. In addition, it is clear that the EE under subarray structure is the highest and under digital structure is the lowest due to the huge power consumption of RF chains.}

We examine the EE of the system under different  optimization schemes in Fig.~\ref{fig8}. ``Max\_EE" stands that the EE of the system when the EE is maximized, while ``Max\_SE" represents the EE of the system  when the SE is maximized. In addition, Algorithm~\ref{algorithm4} is used to solve the optimization problem. It can be observed that when the maximum transmit power is identical, the EE under two optimizations schemes is the same. As the maximum transmit power increases,  the EE reaches maximum and remains constant under ``Max\_EE" scheme, while the EE decreases under ``Max\_SE" scheme. In fact, the objection of ``Max\_SE" scheme is to maximize the SE without considering the power consumption. As a result, the EE may decrease for larger transmit power.

{Fig.~\ref{fig8A} shows the EE versus the minimum harvested energy, and we set $P_{\rm{max}}=45$ dBm. One can observe that the EE keeps a constant when $E_k^{\rm{min}}$ is relatively small, e.g., $E_k^{\rm{min}}\in[0.1\;\;0.4]$. This is because there are  redundant power at the BS that can be used to satisfy the requirement of the harvested energy. However, when $E_k^{\rm{min}}$ is large, more power has to be used to transform into the energy, and thus the EE is decreased.  Based on this, it means that higher EH may lead to a lower EE. Therefore, in general, it is unavailable to increase EE and EH simultaneously, and we have to sacrifice one for improving the other.}

 \begin{figure}[t]
 	\begin{center}
 		\includegraphics[width=9cm,height=6.5cm]{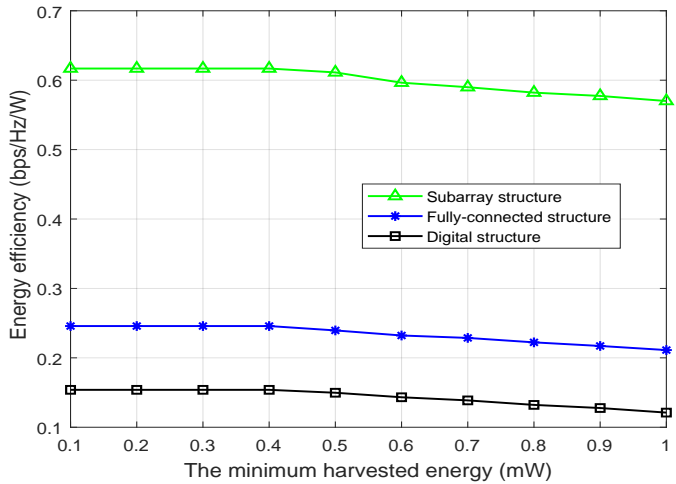}
 		\caption{Energy efficiency versus minimum harvested energy.}
 		\label{fig8A}
 	\end{center}
 \end{figure}

Finally, the tradeoff between the EE and SE is demonstrated under the subarray structure in Fig.~\ref{fig9}. It can be observed that  the EE increases with the SE when the SE is small. For a larger SE, the EE will decreases, which means that a large SE does not lead to a higher EE, and vice versa. Therefore, there exists a tradeoff between the EE and SE, specifically for a higher SE.   
\section{Conclusions}
In this paper, we investigated the EE maximization problem in a joint multicast-unicast mmWave communication system with SWIPT. We first designed the analog precoding for two sparse RF chain structures. Next, we proposed a two-loop algorithm to solve the formulated EE optimization problem. The Bi-section algorithm is adopted in outer loop. Subsequently, we developed two iterative algorithms (e.g., Algorithms~\ref{algorithm4} and \ref{algorithm5}) for the inner loop. Simulation results showed that although the performance of Algorithm~\ref{algorithm5} is slightly inferior to  that of Algorithm~\ref{algorithm4}, fast convergence can be achieved. In addition, it can also be  observed that there still  a tradeoff between EE and SE, specifically for a larger SE.

 \begin{figure}[t]
	\begin{center}
		\includegraphics[width=9cm,height=6.5cm]{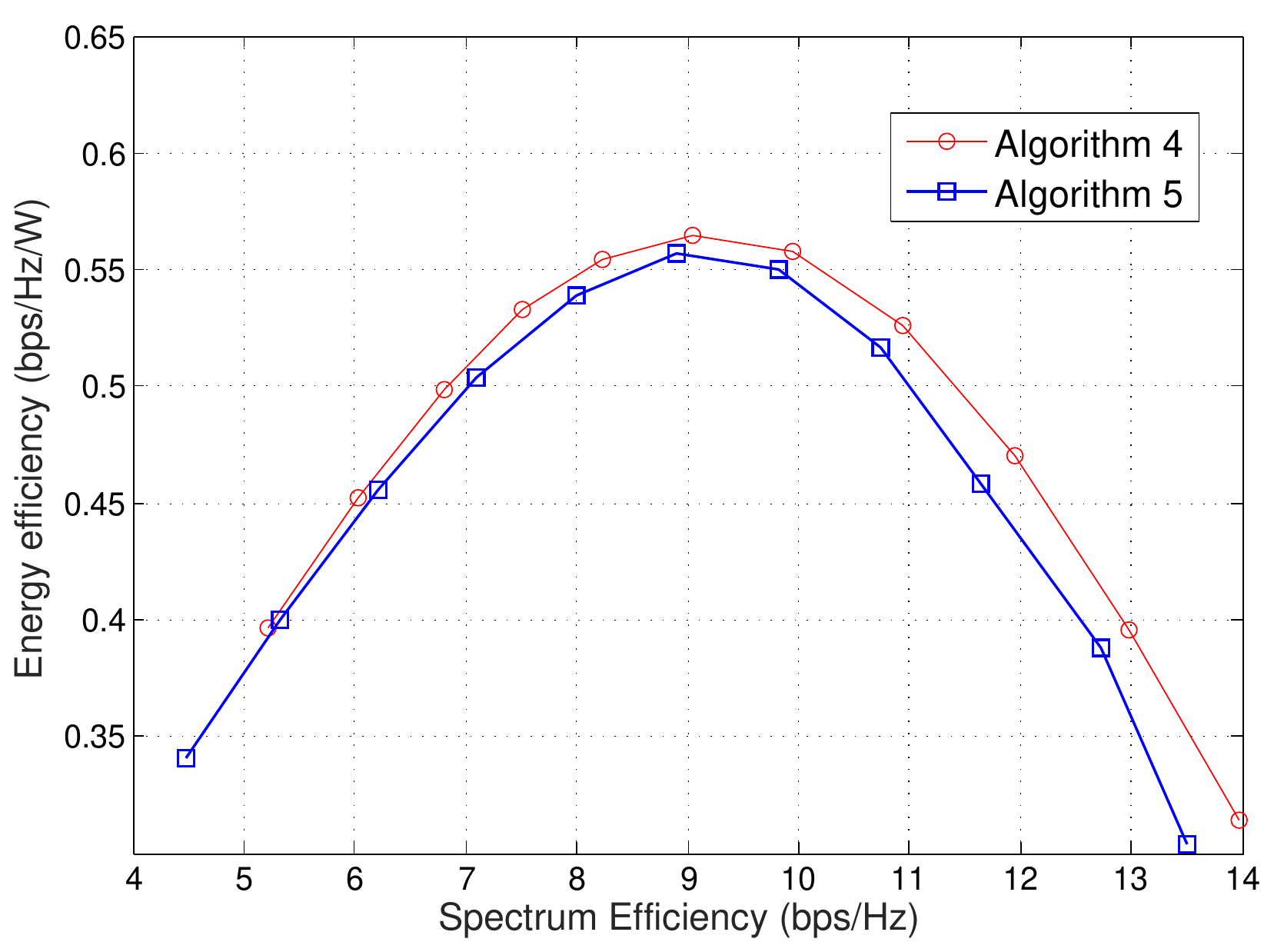}
		\caption{Energy efficiency versus spectral efficiency.}
		\label{fig9}
	\end{center}
\end{figure}

\end{document}